\newtheorem{theorem}{Theorem}
\newtheorem{definition}{Definition}
\newtheorem{corollary}{Corollary}
\newtheorem{claim}{Claim}
\newtheorem{lemma}{Lemma}
\numberwithin{theorem}{section}
\numberwithin{definition}{section}
\numberwithin{proposition}{section}
\numberwithin{corollary}{section}
\numberwithin{claim}{section}
\numberwithin{lemma}{section}
\numberwithin{obs}{section}
\definecolor{darkred}{rgb}{1, 0.1, 0.3}
\definecolor{darkgreen}{rgb}{0.15, 0.65, 0.15}
\definecolor{darkblue}{rgb}{0.1, 0.1, 1}
\algnewcommand{\algorithmicand}{\textbf{ and }}
\algnewcommand{\algorithmicor}{\textbf{ or }}
\algnewcommand{\OR}{\algorithmicor}
\algnewcommand{\AND}{\algorithmicand}
\newcommand{\vhat} {{\widehat{v}}\xspace}
\newcommand{\Shat} {{\widehat{S}}\xspace}
\newcommand{\chat} {{\widehat{c}}\xspace}
\newcommand{\R} {{\mathbb{R}}\xspace}
\newcommand{\LCA} {{\texttt{LCA}}\xspace}
\newcommand{\Child} {{\texttt{Ch}}\xspace}
\newcommand{\minVertexCover} {\textsc{Minimum Vertex Cover}\xspace}
\newcommand{\maxTwoSAT} {\textsc{Max 2-SAT}\xspace}
\newcommand{\tournamentFeedbackVertexSet} {\textsc{Tournament Feedback Vertex Set}\xspace}
\newcommand{\threeDimensionalMatching} {\textsc{3-dimensional Matching}\xspace}
\newcommand{\minHittingSet} {\textsc{Minimum Hitting Set}\xspace}
\newcommand{\maxAgreementSubtree} {\textsc{Maximum Agreement Subtree}\xspace}
\newcommand{\maxAgreementSupertree} {\textsc{Maximum Agreement Supertree}\xspace}
\newcommand{\SMIS} {\textsc{Strong-MIS}\xspace}
\newcommand{\WMIS} {\textsc{Weak-MIS}\xspace}
\newcommand{\Metric} {{\rho}\xspace}
\newcommand{\matchS}    {{\mathcal{S}}}
\newcommand{\myr}   {{\mathsf{s}}}
\newcommand{\rhat}  {{\widehat{\myr}}}
\newcommand{\xhat}  {{\widehat{x}}}
\newcommand{\yhat}  {{\widehat{y}}}
\newcommand{\zhat}  {{\widehat{z}}}
\newcommand{\myA}   {{\mathsf{A}}}
\newcommand{\myB}   {{\mathsf{B}}}
\newcommand{\myC}   {{\mathsf{C}}}
\newcommand{\myD}   {{\mathsf{D}}}
\newcommand{\myR}   {{\mathsf{S}}}
\newcommand{\myp}   {{\mathsf{p}}}
\title{Ordinally Consensus Subset over Multiple Metrics}
\author{Dingkang Wang, Yusu Wang}
\begin{document}

\setcounter{page}{0}
 
\maketitle

\begin{abstract}
    In this paper, we propose to study the following maximum ordinal consensus problem: Suppose we are given a metric system $(\mathcal{M}, X)$, which contains $k$ metrics $\mathcal{M} = \{\rho_1, \ldots, \rho_k\}$ defined on the same point set $X$. 
    We aim to find a maximum subset $X' \subset X$ such that all metrics in $\mathcal{M}$ are ``consistent'' when restricted on the subset $X'$. In particular, our definition of consistency will rely only on the \emph{ordering} between pairwise distances, and thus we call a ``consistent'' subset an ordinal consensus of $X$ w.r.t. $\mathcal{M}$. 
    We will introduce two concepts of ``consistency'' in the ordinal sense: a strong one and a weak one. Specifically, a subset  $X'$ is strongly consistent means that the ordering of their pairwise distances is the same under each of the input metric $\rho_i \in\mathcal{M}$. The weak consistency, on the other hand, relaxes this exact ordering condition, and intuitively allows us to take the plurality of ordering relation between two pairwise distances. 
    
    We show in this paper that the maximum consensus problems over both the strong and the weak consistency notions are NP-complete, even when there are only 2 or 3 simple metrics, such as line metrics and ultrametrics. We also develop constant-factor approximation algorithms for the dual version, the minimum inconsistent subset problem of a metric system $(\mathcal{M}, P)$, -- note that optimizing these two dual problems are equivalent. 
    \end{abstract}
%%% Introduction

\newpage 
\section{Introduction}
In recent years, there have been many studies on data sets with multiple views, which can contain different sets of features from multiple sources carrying different types of information. 
For example, consider neuron cells in the field of neuroscience \cite{Multi-view_Neuron_Data2018}. A single neuron cell could have both morphology features and RNA-sequencing information available. Simply concatenating these two types of feature sets and applying a classical single-view method may not produce meaningful results -- 
The types of features may be different, and it is not clear how to properly weigh them when combined. 
Instead, there have been many approaches developed to handle multi-view data. For example, Lashkari and Golland \cite{EM_MV_clustring_Lashkari_and_Golland}, Bickel and Scheffer \cite{EM_MV_clustering_Bickel_and_Scheffer} considered using EM algorithm and (convex) mixture model on multi-view clustering; Kumar, Rai and Hal \cite{MV_spectral_cl_Kumar_and_Rai_and_Hal} and Cai et al. \cite{MV_spectral_cl_Cai_et_al} extended spectral clustering algorithm for multi-view data. 
See also surveys on multi-view clustering \cite{MV_clustering_survey}, and more broadly, on multi-view learning \cite{MV_learning_survey}. 

Very often in applications, multiple views give rise to multiple metrics $\{\Metric_1, \ldots \\, \Metric_k\}$ over the same data set $X$. 
Our goal is to study whether these metrics are ``consistent'' over $X$, and identify a largest subset $X'$ of $X$, called \emph{consensus}, over which these input metrics are ``consistent''. 
However, when comparing these metrics, note that the precise distance values between points in $X$ induced by different $\Metric_i$s may not have the same meaning, two metrics may not have simple, say, linear relation between them, and thus the distance values are not readily comparable (even after normalization). For example, the distance between two neuron cells based on their tree morphology can be very different from that based on their gene expression profiles. 
Hence in this paper, we will compute consensus under multiple metrics based on \emph{ordinal information}, namely the order of pairwise distances under each metric. 

More specifically, given a \emph{metric system} $(\mathcal{M}, X)$, consisting of a set of $k$ input metrics $\mathcal{M} = \{\Metric_1, ..., \Metric_k\}$ on a discrete data set $X$ with cardinality $n$, we propose to study the problem of finding maximum \emph{ordinal consensus of $X$ w.r.t. $\mathcal{M}$}.  
Specifically, we aim to find a maximum subset $X' \subset X$, such that all metrics will have \emph{consistent} pairwise distances if restricted on node set $X'$. We also call $S = X\setminus X'$ as \emph{outliers}, while $X'$ is our targeted ordinal consensus. The dual problem is to find the minimum inconsistent (outlier) set $S$ such that all metrics are consistent when restricted to the subset $X \setminus S$. 

\noindent {\bf Our contributions.} 
We propose two notions to measure the ``ordinal consistency", which we call strong consistency and weak consistency, respectively. Intuitively, a strong consensus $X' \subseteq X$ means that the order of all pairwise distances among $X'$ must be the same w.r.t. all input metrics. 
Under the weak consistency notion, roughly speaking for each pair of pairwise distances, only a plurality of input metrics (instead of all of them) need to agree on that. 
The formal definitions of these consistency notions are in Section \ref{sec:Preliminaries}.

Note that the maximum (ordinal) consensus subset and the minimum inconsistent subset are equivalent. 
In Section \ref{sec:strongly_incon_set} and \ref{sec:weakly_incon_set}, we will show for both the strong and weak consistency definitions, finding the subset $S$ over an input of a constant number of (two for the strong case, and three for the weak case) ultrametrics or Euclidean metrics on the real line are NP-hard. 
%(For weakly inconsistent subset, we also consider different criteria of breaking ties. This in turn leads to different hardness results). 
These special cases imply that the problems are NP-hard if the inputs are arbitrary metrics. 
We also study the approximation algorithms for both the strongly and weakly minimum inconsistent subset problems. 
In particular, for the strongly inconsistent subset, we propose a 4-approximation algorithm with time complexity $O(kn^2\log n)$. For the weak case, we have a $O(n^6)$-time 6-approximation algorithm. 
See Appendix \ref{app:hardness_results} for a table summarizing our hardness results and approximation algorithms. 

All missing technical details can be found in the appendix. 

\noindent \textbf{Some related work.} 
We note that this maximum consensus problem has been considered before when input metrics are tree-metrics. 
In particular, in the scenario where the inputs are multiple leaf-labeled phylogenetic (rooted) trees, one aims to find a maximum subset of labels that are ``consistent'' among all inputs. 
%This problem is first studied in the fields of evolutionary biology and phylogenetics.  
In \cite{3ordinal_work_liter_max_agreement_subtree_Amir1997}, Amir and Keselman proposed  \maxAgreementSubtree problem (MAST): given a set of rooted binary trees with the same set of taxa (leaf labels), find the maximum subset, such that all the given trees restricted on the subset are isomorphic. This can be considered as a special case of tree consistency \cite{2HCT_Aho_1981}. The \maxAgreementSupertree problem (SMAST) problem is studied in \cite{3ordinal_work_liter_max_agreement_supertree_Berry2007}: Here for the given trees $\mathcal{T} = \{T_1, ..., T_k\}$, the leaf label set $\Lambda(T_i)$ for input trees may not be same. The goal is to find a tree $Q$ with $\Lambda(Q) \subset \cup_{T_i \in \mathcal{T}} \Lambda(T_i)$ such that $|\Lambda(Q)|$ is maximized and for each tree $T_i \in \mathcal{T}$, the subtree $T_i | \Lambda(Q)$ is isomorphic to $Q | \Lambda(T_i)$ (where $T | S$ is the subtree of $T$ restricted on leaf set $S$). 

These definitions of consistency over trees however are not identical from the ordinal consistency we propose. These problems are related to, but still different from, our maximum ordinal consensus problem if the input metrics are ultrametrics. 
An ultrametric can be represented by a corresponding representing tree where each tree node has a height value. Finding a strong consensus $S$ is equivalent to finding a subset of leaf nodes such that the restricted subtrees are not only isomorphic, \emph{but also} the heights of all internal nodes must have the same order -- This height condition appears to make the problem much harder: While the MAST problem on two trees can be solved in polynomial time via dynamic programming, the ordinal consensus problem is NP-hard even for only two ultrametrics, as we will show in Section \ref{sec:strongly_incon_set}. 

%%%%% Preliminaries

\section{Preliminaries and problem setup} \label{sec:Preliminaries}
    The input is a \emph{metric system $(\mathcal{M}; X)$}, consisting of a set of $k$ metrics $\mathcal{M} = \{\Metric_1, \Metric_2, ..., \Metric_k\}$ over the point set $X = \{x_1, ..., x_n\}$. For any $i \in [1,k]$, $\Metric_i(x, x')$ is the distance between point $x, x'\in X$ w.r.t. $\Metric_i$. 
    Our goal is to find a minimum subset $S \subset X$, such that the order of all pairwise distance restricted on $X \backslash S$ are consistent under a certain definition. Below we first introduce two notions of consistency. The two optimization problems we will study are given in Definition \ref{def:MSI_MWI}. 
    
    %\subsection{Definitions of Consistency} %\label{subsec:consistency_definition}
    
    \begin{definition}[Strong Consistency]
        Given a metric system $(\mathcal{M} = \{\Metric_1, \Metric_2, ..., \\ \Metric_k\}; Y = \{y_1, y_2, ..., y_m\})$, we say that the set of metrics $\mathcal{M}$ is {strongly consistent w.r.t. $Y$} if for any quartet $\{y_p, y_q, y_r, y_s\} \subset Y$, we have that (i) $ \Metric_{i}(y_p, y_q) < \Metric_{i}(y_r, y_s) \Leftrightarrow \Metric_{j}(y_p, y_q) < \Metric_{j}(y_r, y_s)$ and (ii) $\Metric_{i}(y_p, y_q) = \Metric_{i}(y_r, y_s) \Leftrightarrow \Metric_{j}(y_p, y_q) = \Metric_{j}(y_r, y_s)$ for any $1 \le i, j \le k$. In this case, we say that $Y$ is \emph{a strongly consistent set, or a strong consensus, over $\mathcal{M}$}. 
        
        We also say that two pairs $(y_p, y_q)$ and $(y_r, y_s)$ are \emph{strongly consistent w.r.t. $\mathcal{M}$}, if the order between these two pairwise distances is the same w.r.t. any metric in $\mathcal{M}$.
    \end{definition}
    
    %Other than strong consistency, We also consider a weaker notion of consistency based on plurality voting\footnote{In plurality voting, a candidate wins if it has the most votes than the other candidates. It is not necessary that it gets a majority (more than 50\%) of the votes. }. 
    Other than strong consistency, we also consider a weaker notion of consistency: In particular, we now only require that the order constructed by taking the \emph{plurality voting}\footnote{In plurality voting, a candidate wins if it has the most votes than the other candidates. It does not have to get a majority (more than 50\%) of the votes. } over all input metrics is valid. 
    To define the weak consistency formally, we will first define the so-called relation set and the auxiliary graph. 

    \begin{definition}[Relation Set of Pairwise Distances] \label{def:3relation_set_of_pd}
        Given an input set of $k$ metrics $\mathcal{M}$ over point set $X$, the \emph{relation set $\mathcal{R}$ of pairwise distances} w.r.t. $(\mathcal{M}; X)$ is the set of relations over all distinct pairs $\{(x_p, x_q) \, | \, x_p, x_q \in X, p \not= q\}$ defined as follows: 
        For any two pairs $(x_p, x_q)$ and $(x_r, x_s)$, among all three possible relations between $(x_p, x_q)$ and $(x_r, x_s)$, namely, $``<"$, $``="$, and $``>"$, the one induced by most number of metrics in $M$ will be included in the relation set $R$. 
        If there is any tie, we will choose the relation with appearance in the metric of smaller index. 
    \end{definition}
    
    For example, the relation $(x_p, x_q) < (x_r, x_s) \in \mathcal{R}$ \emph{if and only if} this relation appears in more (or the same number of) metrics than the other two relations ($``="$, $``<"$).  
            
    The relation set constructed above may be not ``valid'' in the sense that no single metric can generate those relations. To check the validity of this relation set, we now define a specific auxiliary graph $\mathcal{G}$, whose nodes correspond to pairs of points from $X$. There are three different connections between two graph nodes, which correspond to the three possible relations between their corresponding pairs in the relation set $\mathcal{R}$. We will use this graph later to decide whether $\mathcal{R}$ is valid or not.
    
    \begin{definition}[Auxiliary Graph for Relation Set]\label{def:auxiliarygraph} 
        Given the relation set $\mathcal{R}$ of pairwise distances over point set $X$ and metrics $\mathcal{M}$, the auxiliary graph $\mathcal{G} = (V, E)$, where $V = \{(x_i, x_j)| x_i, x_j \in X, i \not= j\}$, is a mixed graph (meaning it contains both directed and undirected edges):
        There is a \emph{directed} edge from $v_1 = (x_p, x_q)$ to $v_2 = (x_r, x_s)$ if $(x_p, x_q) > (x_r, x_s) \in \mathcal{R}$; there is a \emph{undirected} edge between $v_1$ and $v_2$ if $(x_p, x_q) = (x_r, x_s) \in \mathcal{R}$. 
    \end{definition}
    
    We will use $v_1, ..., v_N$ (corresponding to all pairs in $X$) to represent nodes in graph $\mathcal{G}$, where $N = {n \choose 2}$ if $n = |X|$. The auxiliary graph $G$ is a fully connected mixed graph (i.e., every pair of distinct vertices is connected by a unique edge, and the edge can be directed or undirected) with one edge between every two nodes. 
    We say $v_i > v_j$ (or $``="$ or $``<"$) if there is a directed edge from $v_i$ to $v_j$ (or an undirected edge between them, or a directed edge from $v_j$ to $v_i$), which means the pair, say $(x,y)\in X \times X$, represented by $v_i$ has a larger pairwise distance compared with the pair represented by $v_j$. 
    A cycle in a mixed graph can be formed by a mixture of directed edges and undirected edges. 
    Intuitively, suppose we have a completely directed cycle like the one shown in Figure \ref{fig:directed_edges_mixed_graph} (a), 
    then there is a conflict in the relations for pairwise distances, as $v_1 < v_2 < v_3 < v_4 < v_1$. Hence the relation set $\mathcal{R}$ will {\bf not} be valid in this case. 
    There are more types of ``directed" cycles and they can cause such conflict. In particular: 

    \begin{definition}[Directed cycle of mixed graphs]
        A cycle $C = \{e_1, \ldots, e_r\}$ is a \emph{fully-directed cycle} if all edges in it are directed, and the directions of all edges inside are consistent (namely, Figure \ref{fig:directed_edges_mixed_graph} (a)). 
        A directed cycle of a mixed graph is a cycle $C = \{e_1, e_2, ..., e_n\}$ such that (i) it consists of at least one directed edge, and (ii) one can assign a direction for each undirected edges in $C$ to make it into a fully directed cycle. See Figure \ref{fig:directed_edges_mixed_graph} for examples.  
    \end{definition}
    
    \begin{figure*}[h]
    \begin{center}
     \resizebox{\textwidth}{!}{
        \begin{tabular}{cccc}
         \includegraphics[height=0.7in]{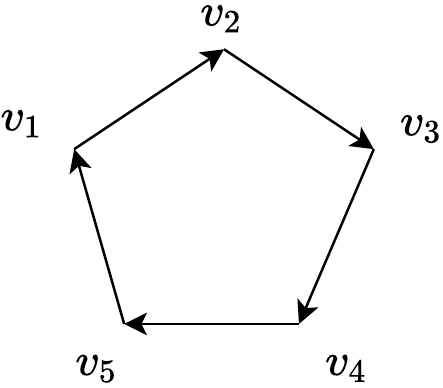} & 
         \includegraphics[height=0.7in]{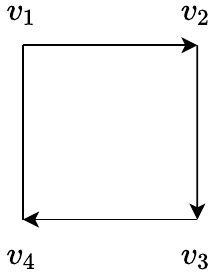} &
         \includegraphics[height=0.7in]{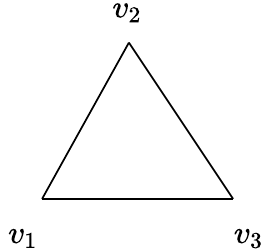} &
         \includegraphics[height=0.7in]{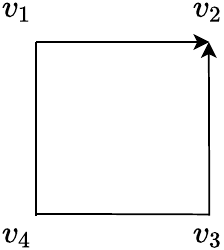} \\
         
        (a) A full-directed  & (b) A directed cycle & (c) This is not  & (d) This is not  \\
         (and directed) cycle &  (of mixed graphs) with & a directed cycle since  &  a directed cycle since 
         \\
         with length 5. &  an undirected edge. & none of the & conflicts between
         \\
         &&edges is directed. & directed edges.
         
        \end{tabular}
        }
    \end{center}
    \caption{Examples of fully-directed cycle, directed cycle and other cycles of a mixed graph.}
    \label{fig:directed_edges_mixed_graph}
\end{figure*}
    
    We say that a metric $\widehat{\Metric}$ \emph{generates} a relation set $\mathcal{R}$, if the order of any two pairwise distances induced by $\widehat{\Metric}$ are the same as in $\mathcal{R}$. In this case, we say the relation set $\mathcal{R}$ is \emph{valid}.  
    The following result provides a simple characterization for a valid relation set by using its corresponding auxiliary graph. 
    \begin{lemma}\label{lem:validrelationset}
            There exists a metric $\widehat{\Metric}$ generating a  relation set $\mathcal{R}$, or equivalently, $\mathcal{R}$ is valid, if and only if there is no directed cycle in the auxiliary graph $\mathcal{G}$ for the relation set $\mathcal{R}$. 
    \end{lemma}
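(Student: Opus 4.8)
The statement is an ``if and only if'' whose two directions are of very different character. The ``only if'' direction --- that a valid relation set forces an acyclic auxiliary graph --- is immediate from the definition of a directed cycle. The ``if'' direction is the substantive one: assuming $\mathcal{G}$ has no directed cycle, the plan is to construct an explicit metric $\widehat{\Metric}$ generating $\mathcal{R}$ in two stages, first assigning consistent numerical ``lengths'' to the nodes of $\mathcal{G}$ (that is, to the pairs of points of $X$), and then rescaling these lengths so that the resulting pairwise values automatically satisfy the triangle inequality.

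\emph{Necessity.} Suppose $\widehat{\Metric}$ generates $\mathcal{R}$ and, for contradiction, that $\mathcal{G}$ contains a directed cycle $C$. By definition we may orient its undirected edges so that $C$ becomes a fully directed cycle $v_1 \to v_2 \to \cdots \to v_r \to v_1$ (as in Figure~\ref{fig:directed_edges_mixed_graph}(a)), at least one of whose edges was originally directed. For each step $v_i \to v_{i+1}$ the corresponding relation in $\mathcal{R}$ is either $v_i = v_{i+1}$ or $v_i > v_{i+1}$, so, since $\widehat{\Metric}$ generates $\mathcal{R}$, the $\widehat{\Metric}$-distance at $v_i$ is $\ge$ that at $v_{i+1}$, with strict inequality at the originally directed edge. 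Chaining these around $C$ yields a strict inequality of a pairwise distance with itself, a contradiction.

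\emph{Sufficiency.} Now assume $\mathcal{G}$ has no directed cycle. Partition the node set $V$ into the connected components of the subgraph consisting of only the undirected edges. The heart of the argument is the following structural claim, which is precisely where the no-directed-cycle hypothesis is used: (i) no two nodes in the same undirected component are joined by a directed edge; (ii) for any two distinct components $A, B$, all directed edges between them point the same way; and (iii) contracting each component to a super-node yields an acyclic digraph, which, since $\mathcal{G}$ is fully connected, is a complete acyclic digraph and hence a transitive tournament, i.e.\ a total order $\prec$ on the super-nodes. Each of (i)--(iii) is proved the same way: a violating configuration --- a directed edge between two nodes linked by an undirected path; two oppositely directed edges between $A$ and $B$ closed up by undirected paths inside $A$ and $B$; a directed cycle through several super-nodes closed up similarly --- can have all its undirected edges oriented ``forward'' to form a fully directed cycle in $\mathcal{G}$, contradicting the hypothesis. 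Given the claim, list the super-nodes along $\prec$ and let $f : V \to \mathbb{Z}_{>0}$ send each node to the position of its super-node in this list (oriented so that a directed edge $u \to w$ means $f(u) > f(w)$). Then for all nodes $u, w$ we get $f(u) > f(w) \Leftrightarrow (u>w)\in\mathcal{R}$ and $f(u) = f(w) \Leftrightarrow (u=w)\in\mathcal{R}$, because every relation recorded in $\mathcal{R}$ is an edge (or an equality, i.e.\ a shared super-node) of $\mathcal{G}$.

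\emph{From lengths to a metric, and the main obstacle.} Finally, fix a real $\epsilon$ with $0 < \epsilon < 1$, compose $f$ with an order-preserving affine map whose image lies in $[1, 1+\epsilon]$, and for $x_p \ne x_q$ define $\widehat{\Metric}(x_p, x_q)$ to be the resulting value of the node representing the pair $\{x_p,x_q\}$. This function is symmetric and strictly positive, and for any three points the three ``sides'' all lie in $[1,1+\epsilon]$, so the sum of any two of them is at least $2 > 1+\epsilon$, which is at least the third; hence the triangle inequality holds and $\widehat{\Metric}$ is a metric on $X$. Since the order of the $\widehat{\Metric}$-distances equals the order of the values of $f$, which equals $\mathcal{R}$, the metric $\widehat{\Metric}$ generates $\mathcal{R}$, which finishes this direction. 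I expect the main obstacle to be verifying the structural claim (i)--(iii); once the components-and-quotient picture is in place, both the definition of $f$ and the rescaling into $[1,1+\epsilon]$ are routine.
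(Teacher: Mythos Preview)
Your proposal is correct and follows essentially the same approach as the paper's proof: both directions are argued identically in spirit, with the ``if'' direction proceeding by collapsing undirected-connected nodes into super-nodes, observing that the quotient is an acyclic tournament (the paper phrases this as iterative contraction followed by topological sort), and then rescaling the resulting rank function into a narrow interval so that the triangle inequality holds automatically. Your structural claim (i)--(iii) makes explicit what the paper leaves somewhat implicit, and your rescaling into $[1,1+\epsilon]$ is just a cosmetic variant of the paper's ``minimum distance larger than half the maximum distance'' trick.
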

    \begin{proof}
    
    \noindent \textbf{The ``$\Rightarrow$" direction.} This direction is relatively easier and we prove it by contradiction. Suppose the relation set is defined over a point set $\widehat{X}$ and there is a metric $\widehat{\rho}$ over set $\widehat{X}$ that generates all relations in $\mathcal{R}$. Then assume there is a directed cycle of graph nodes $C = \langle v_1, v_2 \cdots v_r, v_1\rangle$ in the auxiliary graph $\mathcal{G}$. Now, one can derive that the distance between the pair of nodes in $v_1$ will be larger than itself by following the edges $v_i\to v_{i+1}$ for $i = 1, \ldots, r$, which is a contradiction.  

   	\noindent \textbf{The ``$\Leftarrow$" direction.}
    For any two nodes $v_a$ and $v_b$ connected by an undirected edge (i.e., $v_a$, $v_b$ represents equal distance), we can always contract them together without causing any problems if there is no directed cycle in graph $\mathcal{G}$. It is because that for another arbitrary $v_c$, the directions of edges $(v_a, v_c)$ and $(v_b, v_c)$ are consistent. These two edges will be both going to $v_c$, coming out from $v_c$ or undirected.
   	
    Therefore, we can iteratively contract all node pairs connected by undirected edges until there is no such pair left. The produced graph $\widehat{\mathcal{G}}$ will be a directed graph, and each node corresponds to a set of nodes in the original graph $\mathcal{G}$. It is clear that $\widehat{\mathcal{G}}$ is acyclic, because any directed cycle in $\widehat{\mathcal{G}}$ will map back to a directed cycle (of a mixed graph) in $\mathcal{G}$ simply by selecting an arbitrary representative for all these supernodes in $\widehat{\mathcal{G}}$.
   	
    A linear order over those supernodes can be constructed by, e.g., topological sort, such that the orders among pairwise distances are consistent with the directed acyclic graph (DAG) $\widehat{\mathcal{G}}$. Nodes in $\mathcal{G}$ represented by the same supernode in $\widehat{\mathcal{G}}$ share the same value. This provides a way to assign values over all pairwise distances such that the ordinal relations are consistent with $\mathcal{G}$. To make it a metric, one can force that the minimum distance is larger than one half of the maximum distance.
    This way, no matter which three nodes we consider, the sum of two pairwise distances will always be larger than the third pairwise distance. 
    \end{proof}
    
    We are now ready to define the weak consistency. 
    
    \begin{definition}[Weak Consistency] \label{def:weakly_consistent_met}
        Given a set of $k$ metrics $\mathcal{M} = \{\Metric_1, \Metric_2, ..., \Metric_k\}$ over the same point set $Y = \{y_1, y_2, ..., y_m\}$, we say that the set of metrics $\mathcal{M}$ is weakly consistent with $Y$ if there is no directed cycle in the auxiliary graph $\mathcal{G}(Y)$ for the relation set $\mathcal{R}(Y)$ as specified in Definitions 
        \ref{def:3relation_set_of_pd} and \ref{def:auxiliarygraph}. In this case, we may also say that $Y$ is a weakly consistent set, or a weak consensus, over $\mathcal{M}$.
    \end{definition}

The optimization problems we aim to study in this paper are defined as follows. 
    
    \begin{definition}[\SMIS and \WMIS Problems]\label{def:MSI_MWI}
        Given a set of metrics $\mathcal{M} = \{\Metric_1, ..., \Metric_k\}$ on point set $X$, the \SMIS problem (resp. \WMIS problem) aims to find a minimum subset of $S^* \subset X$, such that all metrics restricted on $X \backslash S^*$ are strongly consistent (resp. weakly consistent).
        
        The set $S^*$ is called the \emph{minimum (strong/weak) inconsistent set}, while $X \setminus S^*$ is called the \emph{maximum (strong/weak) consensus set w.r.t. input metrics $\mathcal{M}$}. 
    \end{definition}
 
 Note that minimizing the inconsistent set is equivalent to maximizing the consensus (although their approximation may not be equivalent). 
 
The decision version of $\SMIS$ (and $\WMIS$) problem is as follows: 
Given $\mathcal{M}, X$ and also an integer $a$. Is there a subset $S \subset X$ with $|S| = a$ such that $\mathcal{M}$ restricted on $X \backslash S$ are strongly consistent (resp. weakly consistent)? 

We will show in Section \ref{sec:strongly_incon_set}
and \ref{sec:weakly_incon_set} that the decision versions of $\SMIS$ and $\WMIS$ are in NP (see lemma \ref{lem:smis_in_NP} and lemma \ref{lem:wmis_in_NP}). Thus in most proofs, we will only show NP-hardness via reductions from NP-hard problems. The NP-completeness naturally follows by the fact that both decision problems are in NP.

\paragraph*{Some specific metrics.}
Later in Sections
\ref{sec:strongly_incon_set} and \ref{sec:weakly_incon_set}, we will show that the decision version of the minimum inconsistent set (equivalently, maximum consensus set) problem is NP-complete even when input metrics are restricted to two common choices: the Euclidean metric on the line (and thus in any $\R^d$), and the ultrametrics. 

\begin{definition}[Line metric] \label{def:line_metric}
    A line metric is a metric $(\R, \Metric)$, where the distance function $\Metric(x, y) = |x - y|$, and $x, y \in \R$ (Note, this is simply the Euclidean metric on $\R$).
\end{definition}

\begin{definition}[Ultrametric] \label{def:ultrametric}
    An ultrametric is a metric $(Z, \Metric)$ defined on a set $Z$, 
    which satisfies the following strong triangle inequality: for any $x, y, z \in Z$, 
    $d(x, z) \le \max (d(x, y), d(y, z)).$
\end{definition}

Any finite ultrametric $(Z, \Metric)$ has a corresponding \emph{representing tree} \cite{3Ordinal_representing_tree_Dovgoshey2019} $T_Z$ such that: 
    \begin{enumerate}
        \item $T_Z$ is a rooted tree with the set of leaf nodes being $Z$. $T_Z$ is equipped with a height function $h: N \cup Z \rightarrow \mathbb{R}_{+}$, where $N$ is the set of internal nodes of $T_Z$ such that (i) all leaves have the same height; and (ii) $h$ is non-increasing along any root to leaf path.
        \item For any two leaf nodes $z$ and $z'$, their distance $\Metric(z, z')$ equals to $h(\LCA(z, z'))$, namely, the height of their lowest common ancestor (\LCA($z,z'$)).  
    \end{enumerate}
    
    An example of an ultrametric and its representing tree is given in Appendix \ref{app:ultra_representingTree}.
    
%\section{Hardness Results} \label{sec:Hardness_result}
\section{\SMIS problem}
    \label{sec:strongly_incon_set}

    In this section, we will study the \SMIS problem. 
    Specifically, we show in Theorem \ref{thm:MSI_2_line_mets} and \ref{thm:MSI_2_ultra_NP} that the decision version of the \SMIS problem is NP-complete even when the input metric spaces are restricted to two very simple cases: the line metrics and the ultrametrics. 
    Corollary \ref{coro:strong_NP_2metrics_inapprox} gives an inapproximability result. 
    To complement that, in Theorem \ref{thm:MSI_4_approx_alg}, we provide a $4$-approximation algorithm for the general case.

    Intuitively, finding the minimum strongly inconsistent set has a similar flavor as \minVertexCover or \minHittingSet \cite{HittingSet_NP_Book_Garey}. Intuitively, in \SMIS problem, given data points $X$, a quartet $(x_p, x_q, x_r, x_s)$ will be a target set if they induce a conflict between any two input metrics, and the goal is to find a \minHittingSet $\mathcal{H}$ such that for all target sets, at least one element is in $\mathcal{H}$. 
    However, to show that \SMIS remains hard even for special simple metrics, we need to construct reductions carefully, and sometimes need to use different NP-complete problems to reduce from. We include the list of NP-complete problems used for reductions in Appendix \ref{app:np_complete_repo}. 

    To study \SMIS, we first define the so-called Conflict Set, which will be used frequently in NP-hardness proofs.
    
    \begin{definition}[Conflict Set] \label{def:conflict_set}
        Given $k$ metrics $\mathcal{M} = \{\Metric_1, ..., \Metric_k\}$ on node set $X = \{x_1, ..., x_n\}$, the \emph{conflict set $\mathcal{C}$} induced by $(\mathcal{M}; X)$ is defined as $\mathcal{C} = \{(x_p, x_q, x_r, x_s) | (x_p$, $x_q), (x_r, x_s) \text{ are not strongly consistent over } \mathcal{M}\}$. 
        Each element in this conflict set is called a \emph{conflict quartet}. 
    \end{definition}
    
    It is clear that the decision version of $\SMIS$ is in NP as stated in the following lemma.
    
    \begin{lemma} \label{lem:smis_in_NP}
    The decision version of \SMIS is in NP.
    \end{lemma}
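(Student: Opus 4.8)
The plan is to establish membership in NP in the standard way: exhibit a polynomially bounded certificate together with a deterministic polynomial-time verifier. Given an instance consisting of a metric system $(\mathcal{M}, X)$ with $\mathcal{M} = \{\Metric_1, \ldots, \Metric_k\}$ and $|X| = n$, together with the target integer $a$, the certificate I would use is simply a candidate outlier set $S \subseteq X$ with $|S| = a$ (equivalently, one could hand over the consensus set $X \setminus S$). Such an $S$ is described by listing at most $n$ indices from $[n]$, so its length is $O(n \log n)$, which is polynomial in the size of the input (the input itself being of size $\Theta(k n^2)$ times the bit-length of the distance values).

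For the verification step, the verifier checks directly that $\mathcal{M}$ restricted to $X' := X \setminus S$ is strongly consistent, using the defining condition: for every pair of distinct point-pairs $(x_p, x_q)$ and $(x_r, x_s)$ with $x_p, x_q, x_r, x_s \in X'$, it determines, for each metric $\Metric_i$, whether $\Metric_i(x_p, x_q)$ is less than, equal to, or greater than $\Metric_i(x_r, x_s)$, and confirms that this relation is the same for all $i \in [1,k]$. (Equivalently, it checks that no conflict quartet of the conflict set $\mathcal{C}$ from Definition \ref{def:conflict_set} lies entirely inside $X'$.) There are at most $\binom{n}{2}^2 = O(n^4)$ such pairs of point-pairs to examine, and each costs $O(k)$ comparisons of input numbers, so the verifier runs in time $O(k n^4)$, which is polynomial. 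It accepts if and only if every pair of point-pairs is consistent across all $k$ metrics.

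Correctness then follows immediately from the definition of strong consistency: a set $S$ with $|S| = a$ rendering $\mathcal{M}$ strongly consistent on $X \setminus S$ exists if and only if some certificate is accepted, and conditions (i) and (ii) of that definition are exactly what the verifier tests. There is no genuine obstacle in this argument; the only points requiring a moment of care are to note that the certificate length is polynomial in the input and that the verifier must test equalities among pairwise distances (condition (ii)) and not merely the strict order (condition (i)). Together with the existence of such a verifier, this shows that the decision version of \SMIS lies in NP.
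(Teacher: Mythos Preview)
Your proof is correct and takes essentially the same approach as the paper: use the candidate outlier set $S$ as the certificate and verify strong consistency on $X\setminus S$ by iterating over all quartets and comparing the pairwise-distance relations across the $k$ metrics. You simply flesh out details (certificate size, the $O(kn^4)$ running time, and the need to check equality as well as strict order) that the paper leaves implicit.
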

    \begin{proof}
    Given a set $S$ with size $a$, one can check whether metrics in $\mathcal{M}$ are strongly consistent on $X \backslash S$ by simply iterating over all possible quartets and comparing their pairwise distances in different metrics. This process takes polynomial time.
    \end{proof}
    
    %Note that there might be at most one pair of duplicate nodes in a conflict quartet $(x_p, x_q, x_s, x_r)$ (say, $x_p = x_q$), and this will not affect the correctness of our later proofs.
    
    %We start from a very simple setting where $\mathcal{M}$ consists of only two line metrics. 
    It turns out that the decision version of \SMIS is weakly NP-complete (which allows the magnitude of data involved to be exponential) even if we restrict the input metrics to be only two line metrics. The proof is in Appendix \ref{app:MSI_2_line_mets_NP}.
    
    \begin{theorem} \label{thm:MSI_2_line_mets}
        The decision version of \SMIS is weakly NP-complete even when one only considers metric systems $(\mathcal{M}; X)$ where $\mathcal{M} = \{\Metric_1, \Metric_2\}$ contains only two line metrics. 
    \end{theorem}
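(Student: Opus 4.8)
The plan is to prove NP-hardness by a polynomial-time reduction from the (weakly NP-complete) \textsc{Partition} problem (equivalently \textsc{Subset-Sum}); since membership of the decision version of \SMIS in NP is already established in Lemma~\ref{lem:smis_in_NP}, this yields weak NP-completeness. Note that any such reduction must encode the input integers $w_1,\dots,w_n$ as coordinates of points on the line, so the instances produced carry numbers of magnitude comparable to (or larger than) the $w_i$; this is precisely why the hardness obtained is in the \emph{weak} sense, matching the statement.

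Given a \textsc{Partition} instance $w_1,\dots,w_n$ with $\sum_i w_i = 2W$, I would construct a point set $X$ together with two placements of $X$ on $\R$, i.e. two line metrics $\Metric_1,\Metric_2$, made of two kinds of points. First, a small \emph{rigid skeleton} of ``anchor'' points whose coordinates sit at widely separated scales (this is where large coordinates naturally enter); these anchors realize, in one metric, a prescribed pattern of orderings/equalities among pairwise distances. Second, for each item $i$ a pair of ``selector'' points $p_i^{+},p_i^{-}$ whose coordinates in $\Metric_1$ versus $\Metric_2$ are offset by amounts governed by $w_i$. The gadget is designed so that a strongly consistent subset obtained by deleting only $a:=n$ points must (i) keep the entire skeleton and (ii) delete exactly one point of each pair $\{p_i^{+},p_i^{-}\}$, where keeping $p_i^{+}$ encodes ``item $i$ on side $L$'' and keeping $p_i^{-}$ encodes ``item $i$ on side $R$''. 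The anchor coordinates are chosen so that any quartet consisting of an anchor pair together with a ``mismatched'' pair of retained selectors is a conflict quartet in the sense of Definition~\ref{def:conflict_set} unless the retained $L$-selectors satisfy $\sum_{i\in L} w_i = W$.

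I would then establish the two directions. For the forward direction: given a partition $L\sqcup R=[n]$ with $\sum_{i\in L}w_i=\sum_{i\in R}w_i=W$, delete $p_i^{-}$ for $i\in L$ and $p_i^{+}$ for $i\in R$, so $|S|=n$, and verify directly that on the surviving set the two line metrics induce identical orders (and ties) on all pairwise distances, the only comparisons in danger being the anchor-versus-selector ones, which now balance because the retained $L$-selectors sum to $W$. For the backward direction: given a strongly consistent $X'=X\setminus S$ with $|S|=n$, first argue that $S$ cannot remove any anchor — removing one anchor still leaves a family of conflict quartets among the remaining anchors and selectors whose destruction would force $|S|>n$ — so all anchors survive; then the remaining $n$ deletions must hit exactly one point of each selector pair (otherwise some pair $\{p_i^+,p_i^-\}$ is fully retained, which the gadget also makes a conflict with two anchors), and conflict-freeness of the anchor-versus-selector quartets forces the retained $L$-selectors to sum to $W$, i.e. a valid partition.

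The hard part will be the \emph{design of the coordinates}: one must pick the anchor scales and the selector offsets so that (a) every ``bad'' configuration is genuinely a conflict quartet while every ``good'' configuration is conflict-free, getting the arithmetic of sums of coordinates to encode the exact-sum condition of \textsc{Partition}; and (b) there is no unintended cheap way to kill all conflict quartets — e.g. by deleting two anchors and rearranging — so that the minimum inconsistent set is forced to have the intended structure. Verifying NP membership, the forward direction, and the polynomial bound on the size of the constructed instance are all routine by comparison.
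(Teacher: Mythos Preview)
Your proposal takes a genuinely different route from the paper: you reduce from \textsc{Partition}, whereas the paper reduces from \minVertexCover. In the paper's construction, for a graph $G=(V,E)$ one places $n$ ``vertex'' points $\vhat_i$ at coordinates $2^{i-1}$ (identical in both metrics) and, for each edge $e_j=(v_a,v_b)$, two ``pivot'' points $r_{e_j,l}, r_{e_j,r}$ with $[r_{e_j,l}]=2^{n+2(j-1)}$ in both metrics and $[r_{e_j,r}] = [r_{e_j,l}]+(2^{b-1}-2^{a-1})\pm\epsilon$ in $\Metric_1$, $\Metric_2$ respectively. The exponentially separated scales guarantee that the \emph{only} conflict quartets are $\{\vhat_a,\vhat_b,r_{e,l},r_{e,r}\}$ with $e=(v_a,v_b)\in E$, so a minimum inconsistent set corresponds exactly to a minimum vertex cover. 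The ``weak'' in the statement comes not from the source problem but from the exponential coordinates needed to isolate the conflicts.

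Your plan, by contrast, is only a sketch and carries a real gap at its core. The decisive step --- that conflict-freeness of the anchor/selector quartets forces the retained selectors to satisfy $\sum_{i\in L} w_i = W$ exactly --- is asserted but not constructed, and it is far from clear that it \emph{can} be constructed. A conflict quartet compares two pairwise distances, each determined by two points of $X$; thus any single quartet condition sees the coordinates of at most four points and hence at most a couple of the $w_i$. To enforce a global additive identity over all $n$ items you would need a carefully chained family of quartets whose purely ordinal constraints collectively pin down the full sum, yet the points' coordinates are fixed and only the deletion pattern varies --- so surviving pairwise distances do not ``accumulate'' the $w_i$'s in any evident way. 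You yourself flag this as ``the hard part'', and until it is actually carried out the argument is not a proof. The paper's reduction sidesteps this difficulty entirely: \minVertexCover is already a conjunction of local (per-edge) hitting constraints, and these map one-to-one onto per-edge conflict quartets, with no arithmetic aggregation required.
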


    The following theorem shows that even with only 2 ultrametrics, finding a \SMIS remains NP-hard. The proof uses a reduction from \maxTwoSAT problem; it is non-trivial and can be found in Appendix \ref{app:MSI_2_ultra_NP}. 
    
    \begin{theorem} \label{thm:MSI_2_ultra_NP}
            Given a metric system $(\mathcal{M}; X)$, where $\mathcal{M} = \{\Metric_1, \Metric_2\}$ contains two ultrametrics, The decision version of \SMIS is NP-complete.
    \end{theorem}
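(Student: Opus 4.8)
The plan is to establish NP-hardness by a polynomial-time reduction from \maxTwoSAT; since the decision version of \SMIS is in NP by Lemma~\ref{lem:smis_in_NP}, this gives NP-completeness. A \maxTwoSAT instance is a 2-CNF formula $\phi$ on variables $x_1,\dots,x_n$ with clauses $C_1,\dots,C_m$, together with a threshold $t$, and we ask whether some truth assignment satisfies at least $t$ clauses. Because each ultrametric is encoded by a representing tree with a height function on its internal nodes (Section~\ref{sec:Preliminaries}) and the distance between two leaves is the height of their \LCA, I would build two representing trees $T_1,T_2$ on a common leaf set $X$ that have the \emph{same shape} on all relevant sub-configurations, so that a conflict quartet arises exactly when $T_1$ and $T_2$ disagree on the \emph{relative order of two \LCA-heights}. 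Controlling the reduction then amounts to choosing the two height functions so that the intended conflicts, and only those, appear.

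The gadgets would be as follows. For each variable $x_i$ introduce a constant-size \emph{variable gadget}: at its core two leaves $p_i^{\mathsf{T}},p_i^{\mathsf{F}}$ together with a few private auxiliary leaves, whose heights in $T_1$ and $T_2$ are arranged so that the gadget contains a conflict quartet that can be destroyed only by deleting $p_i^{\mathsf{T}}$ or $p_i^{\mathsf{F}}$; deleting $p_i^{\mathsf{F}}$ (resp.\ $p_i^{\mathsf{T}}$) is interpreted as setting $x_i$ true (resp.\ false). For each clause $C_j=(\ell_{j1}\vee\ell_{j2})$ introduce a \emph{clause leaf} $c_j$ and choose the heights involving $c_j$ and the two literal-leaves of $\ell_{j1},\ell_{j2}$ so that a conflict quartet through $c_j$ survives \emph{iff} both literal-leaves have been kept in the ``wrong'' state, i.e.\ iff the chosen assignment falsifies $C_j$; such a conflict can be destroyed by additionally deleting $c_j$. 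With the heights placed in well-separated bands (so that gadgets do not interfere), with auxiliary/anchor leaves made so ``expensive'' that an optimal solution never deletes them, and with the global scaling trick from the proof of Lemma~\ref{lem:validrelationset} applied to make everything a genuine metric, the minimum strongly inconsistent set should have size exactly $n+(m-\mathrm{OPT}_\phi)$, where $\mathrm{OPT}_\phi$ is the maximum number of clauses of $\phi$ satisfiable by a single assignment. Then $\phi$ admits an assignment satisfying $\ge t$ clauses iff $(\mathcal{M};X)$ admits a strongly inconsistent set of size $\le n+(m-t)$, as required.

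For correctness, one direction is immediate: an assignment satisfying at least $t$ clauses yields, by deleting the chosen $p_i^{\bullet}$ from each variable gadget and one clause leaf $c_j$ for each clause it falsifies, a set $S$ with $|S|\le n+(m-t)$, and one checks directly that no conflict quartet survives in $X\setminus S$. The reverse direction requires a \emph{normalization (exchange) argument}: starting from an arbitrary strongly inconsistent set $S$ with $|S|\le n+(m-t)$, repeatedly replace deletions of auxiliary/anchor leaves, and deletions of clause leaves whose clause is already satisfied, by deletions of the canonical $p_i^{\bullet}$, never increasing $|S|$, until $S$ deletes exactly one of $\{p_i^{\mathsf{T}},p_i^{\mathsf{F}}\}$ per variable and only clause leaves otherwise. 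The surviving literal-leaves then define a truth assignment, and since any clause it falsifies whose $c_j$ was not deleted would still leave a surviving conflict quartet, at most $|S|-n\le m-t$ clauses are falsified, so at least $t$ are satisfied.

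The hard part will be the design of the two height functions so that the \emph{only} conflict quartets are the intended ones. Ultrametrics are rigid — every pairwise distance is an \LCA-height, constrained by monotonicity along root-to-leaf paths and by the strong triangle inequality — so inserting a desired ``local'' conflict (especially the ``clause $C_j$ conflicts iff both its literals are falsified'' behavior, which must be verified against all four literal-state combinations) tends to create unwanted conflicts between leaves of different gadgets. Making the height bands separate cleanly, checking the clause gadget exhaustively, and making the normalization argument robust to the presence of the auxiliary leaves is where the real work lies; this is presumably why the reduction from \maxTwoSAT is described as non-trivial.
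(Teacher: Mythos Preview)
Your overall plan---reduce from \maxTwoSAT, build two representing trees with the \emph{same shape} and different height functions, and engineer conflict quartets via disagreements in the relative order of \LCA-heights---matches the paper's approach exactly, and you correctly identify that controlling unintended conflicts is the crux. However, your gadget architecture differs from the paper's in a way that makes your version harder to pin down.

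You propose one leaf pair $p_i^{\mathsf T},p_i^{\mathsf F}$ \emph{per variable}, shared across all clauses containing that variable, plus one leaf $c_j$ per clause. The paper instead introduces leaves \emph{per literal occurrence}: every clause $c_j$ gets its own subtree, and each literal of $c_j$ gets two private leaves. Variable consistency is not enforced by a dedicated variable gadget; rather, the two height functions swap the relative order of the internal nodes labeled $x_i$ and $\bar x_i$, so that keeping both leaves below some occurrence of $x_i$ and both leaves below some occurrence of $\bar x_i$ automatically creates a conflict quartet. The clause constraint (``you may keep two leaves under $\chat_j$ only if they come from the same literal'') is enforced by a single large anchor block $B$ whose height sits between the literal heights and the clause heights in one metric but above the clause heights in the other. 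The budget then becomes: two leaves per satisfied clause, one per unsatisfied clause, plus the entire anchor block.

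The paper's per-occurrence design decouples clauses completely: the only cross-clause interaction is the $x_i$-versus-$\bar x_i$ height swap, which is global and easy to analyze. In your design, a single leaf $p_i^{\mathsf T}$ must simultaneously participate in the correct conflict pattern with every $c_j$ whose clause mentions $x_i$; because all distances from $p_i^{\mathsf T}$ are \LCA-heights along a single root-to-leaf path, arranging the ``survives iff both wrong literal-leaves survive'' behavior for several clauses at once is delicate, and your band-separation idea does not obviously prevent conflict quartets that mix $c_j$ with literal-leaves of a \emph{different} clause sharing a variable. This is not necessarily fatal, but it is a real obstacle that the paper's construction sidesteps entirely. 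If you want to push your version through, you would likely end up re-introducing per-occurrence copies anyway; alternatively, adopting the paper's anchor-block trick gives a much cleaner exchange argument (there is always an optimal solution keeping all anchors, and then the remaining choices are forced to be per-clause).
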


        The hardness result on 2 ultrametrics implies that finding \SMIS is also NP-hard when the input is two arbitrary metrics. 
        The result is stated below. This theorem can also be proven directly via a reduction from \minVertexCover, which for completeness we include the simple details in Appendix \ref{app:prove_thm_strongNP_2metric}. 
        
        \begin{theorem} \label{thm:strong_NP_2metrics}
            Given a metric system $(\mathcal{M}; X)$, where $\mathcal{M} = \{\Metric_1, \Metric_2\}$ contains two arbitrary metrics, the decision version of \SMIS is NP-complete.  
        \end{theorem}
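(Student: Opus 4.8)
The statement is in fact immediate from Theorem~\ref{thm:MSI_2_ultra_NP}: an ultrametric is a metric, so the hardness for two ultrametrics already yields hardness for two arbitrary metrics, and membership in NP is Lemma~\ref{lem:smis_in_NP}. So the only thing worth doing here is to record a short self-contained proof, and the plan is to give a direct polynomial-time reduction from \minVertexCover (NP-hard) to the decision version of \SMIS with $k=2$. The guiding idea is to turn each edge of the input graph into exactly one conflict quartet, padding it with two fresh ``gadget'' points that are private to that edge, so that hitting all conflict quartets is the same as covering all edges.

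Concretely, from a graph $G=(V,E)$ with $V=\{v_1,\dots,v_n\}$ I would take $X = V \cup \{p_e,q_e : e\in E\}$, so $|X| = n+2|E|$. For $\Metric_1$ I would choose an injective assignment of distinct values in the open interval $(1,2)$ to all unordered pairs of $X$, \emph{except} that for each edge $e=\{v_i,v_j\}$ the pair $\{p_e,q_e\}$ receives the same value as the pair $\{v_i,v_j\}$; since all values lie in $(1,2)$ the triangle inequality is automatic, so $\Metric_1$ is a metric. For $\Metric_2$ I would keep every value of $\Metric_1$ except $\Metric_2(p_e,q_e)=\Metric_1(p_e,q_e)+\varepsilon_e$, with the $\varepsilon_e>0$ distinct and small enough that no two values of $\Metric_1$ that are not already equal get reordered; then $\Metric_2$ is a metric as well.

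The next step is to check that the conflict set of $(\{\Metric_1,\Metric_2\};X)$ is exactly $\{(v_i,v_j,p_e,q_e) : e=\{v_i,v_j\}\in E\}$. Each such quartet is a conflict because $\Metric_1(v_i,v_j)=\Metric_1(p_e,q_e)$ while $\Metric_2(v_i,v_j)\neq\Metric_2(p_e,q_e)$, violating condition~(ii) of strong consistency; and no other pair of pairs is a conflict, since away from the pairs $\{p_e,q_e\}$ the two metrics coincide, and the $\varepsilon_e$ were chosen too small to change any order relation other than breaking the designed ties. Hence $\mathcal{M}$ restricted to $X\setminus S$ is strongly consistent iff $S$ meets $\{v_i,v_j,p_e,q_e\}$ for every edge $e=\{v_i,v_j\}$. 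I would then argue that the minimum size of such an $S$ equals the vertex cover number of $G$: any vertex cover is such a set, and conversely, given such a set, replacing every $p_e$ or $q_e$ it contains by an endpoint of $e$ never increases its size and yields a vertex cover. Thus $G$ has a vertex cover of size $\le a$ iff the metric system has a strongly inconsistent set of size $\le a$ (and, since inconsistent sets stay inconsistent after adding points, of size exactly $a$ whenever $a\le|X|$). The construction is polynomial, so \SMIS on two metrics is NP-hard, and NP-complete by Lemma~\ref{lem:smis_in_NP}.

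The only points requiring care — and the reason this is an appendix-level argument rather than a deep one — are bookkeeping: putting the base values of $\Metric_1$ in generic position and taking the $\varepsilon_e$ small enough that $\Metric_2$ introduces no spurious conflicts, verifying the metric axioms (handled by confining all distances to $(1,2)$), and reconciling the ``exactly $a$'' phrasing of the decision problem with the ``at most $a$'' optimum (handled by padding $S$). There is no genuine obstacle; all the content is in the edge-to-conflict-quartet correspondence above, which is precisely why the full result is obtained more cheaply by simply invoking Theorem~\ref{thm:MSI_2_ultra_NP}.
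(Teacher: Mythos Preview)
Your proposal is correct and follows essentially the same approach as the paper: both give a direct, size-preserving reduction from \minVertexCover in which edges of $G$ become exactly the conflict quartets that an inconsistent set must hit. The only difference is cosmetic: the paper uses a shared pool of reference points $r_1,\dots,r_n$ (with $\Metric_1(\vhat_i,\vhat_j)=1+\epsilon$, $\Metric_2(\vhat_i,\vhat_j)=1-\epsilon$ for edges, so the conflict is an order reversal) while you use two private gadget points $p_e,q_e$ per edge (with the conflict arising from a broken tie); both constructions yield the same size equivalence and hence the same $2-\epsilon$ inapproximability corollary.
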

        
        In fact, the proof in Appendix \ref{app:prove_thm_strongNP_2metric} gives a size-preserving reduction. 
        %In other words, for any instance of \minVertexCover with optimal size $k$, we construct an instance of \SMIS with optimal size $k$. Therefore, the inapproximablity result on \minVertexCover \cite{3ordinal_work_VC_inapprox_Subhash08} can be applied to \SMIS. The result is stated below as a corollary. 
    Hence the Corollary below follows directly from the inapproximability result \cite{3ordinal_work_VC_inapprox_Subhash08} for \minVertexCover.     
        \begin{corollary} \label{coro:strong_NP_2metrics_inapprox}
            \SMIS with 2 metrics is Unique Games-hard to approximate within a factor $2 - \epsilon$, where $\epsilon$ is an arbitrarily small positive number. 
        \end{corollary}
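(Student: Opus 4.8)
The plan is to obtain the corollary by composing two facts: the size-preserving reduction from \minVertexCover to two-metric \SMIS established in the proof of Theorem~\ref{thm:strong_NP_2metrics} (Appendix~\ref{app:prove_thm_strongNP_2metric}), and the Unique Games--hardness of approximating \minVertexCover within $2-\epsilon$ due to Khot and Regev~\cite{3ordinal_work_VC_inapprox_Subhash08}. Concretely, that reduction sends a graph $G=(V,E)$ to a metric system $(\mathcal{M};X)$ with $\mathcal{M}=\{\Metric_1,\Metric_2\}$ in polynomial time, and it comes equipped with polynomial-time maps in both directions between vertex covers of $G$ and strongly inconsistent sets of $(\mathcal{M};X)$ that do not increase cardinality. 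Writing $\gamma$ for the minimum vertex cover number of $G$ and $\beta$ for the size of a minimum strongly inconsistent set of $(\mathcal{M};X)$, this gives $\beta=\gamma$, and, more importantly, lets us convert any feasible strongly inconsistent set into a vertex cover of no larger size and vice versa.

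First I would spell out why this reduction is approximation-preserving, not merely a correct many-one reduction between decision problems: one checks (i) that the conflict quartets of $(\mathcal{M};X)$ are exactly the edge gadgets, so that a subset of $X$ is strongly inconsistent iff its restriction to the ``vertex'' coordinates of $X$ hits every edge of $G$, and (ii) that any auxiliary (non-vertex) point appearing in a strongly inconsistent set can be exchanged for an endpoint of its associated edge without creating a new conflict and without increasing the set's size. Given (i)--(ii), if $\mathcal{A}$ is a polynomial-time $(2-\epsilon)$-approximation for two-metric \SMIS, then running $\mathcal{A}$ on $(\mathcal{M};X)$ and mapping its output back to $G$ yields a vertex cover of size at most $(2-\epsilon)\beta=(2-\epsilon)\gamma$, i.e.\ a polynomial-time $(2-\epsilon)$-approximation for \minVertexCover; by~\cite{3ordinal_work_VC_inapprox_Subhash08} this refutes the Unique Games Conjecture. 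That is precisely the assertion of the corollary.

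The one point that needs care --- and the main obstacle --- is ensuring the cardinality bound holds \emph{at the level of optima}, not just on yes/no instances: if the gadget forces a constant number of auxiliary points always to be deleted, or introduces an additive $O(1)$ gap between $|C|$ and $|S|$, then $\beta$ and $\gamma$ differ by an additive constant. This is harmless and is removed in the standard fashion, by first replacing $G$ with $t$ vertex-disjoint copies of itself for growing $t$: this scales both $\gamma$ and $\beta$ by $t$ while keeping the discrepancy $O(1)$, so its relative effect vanishes and the inapproximability factor $2-\epsilon$ survives (after shrinking $\epsilon$ by an arbitrarily small amount). All remaining steps are bookkeeping once the structure of the conflict set of the gadget from Appendix~\ref{app:prove_thm_strongNP_2metric} is in hand.
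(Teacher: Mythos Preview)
Your approach is essentially the same as the paper's: the paper simply notes that the reduction in Appendix~\ref{app:prove_thm_strongNP_2metric} is size-preserving and invokes Khot--Regev, without spelling out the approximation-preserving details you supply. Two minor remarks: first, in that particular reduction the auxiliary points $r_1,\ldots,r_n$ are not ``associated with edges'' but serve as global reference points, and the exchange argument is that if a feasible $S$ fails to cover some edge via vertex points then it must contain \emph{all} $r_i$'s (hence $|S|\geq n$ and one can output $V$ itself); second, because of this the optima coincide exactly ($\beta=\gamma$ with no additive gap), so your disjoint-copies amplification step, while harmless, is unnecessary here.
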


%\subsubsection{Approximation algorithm for \SMIS}
\paragraph*{Approximation algorithm for \SMIS.}
        As we mention above, one can consider a collection $\mathcal{C}$ of all conflict quartet $(x_p, x_q, x_r, x_s)$ as the target set. The goal is to find a minimum set from $X$ such that it intersects (hits) every quartet in $\mathcal{C}$. This is actually a special case of $4$-hitting set problem, and it is easy to obtain a 4-approximation algorithm in time $O(k n^4)$ time by checking all quartets. However, below we show we can improve this to $O(kn^2 \log n)$ time complexity.
    
    \begin{theorem} \label{thm:MSI_4_approx_alg}
        Given a metric system $(\mathcal{M}; X)$ where $\mathcal{M} = \{\Metric_1, \ldots, \Metric_k\}, X = \{x_1,\ldots, x_n\}$, there is an $O(kn^2\log n)$ 4-approximation algorithm for the \SMIS problem.
    \end{theorem}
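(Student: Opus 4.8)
The plan is to recast \SMIS as a greedy hitting-set computation on the conflict set $\mathcal{C}$ of Definition~\ref{def:conflict_set}, and then show that this greedy step --- naively $O(kn^4)$ --- can be executed in $O(kn^2\log n)$ time. Recall that $(\mathcal{M};X')$ is strongly consistent exactly when no conflict quartet lies inside $X'$, equivalently --- by transitivity of each metric's order --- when all $k$ metrics sort the pairs of $X'$ into the same order (ties included). The algorithm keeps a working point set, initially $X$, and repeatedly: locate a conflict quartet inside it, move that quartet's (at most four, at least three) points into the output set $S$, delete them from the working set, and halt once the working set is strongly consistent. The quartets removed in successive rounds are point-disjoint, so any feasible solution --- in particular an optimum $S^*$ --- must contain a distinct point of each; if there are $t$ rounds then $|S^*|\ge t$ while $|S|\le 4t$, giving the $4$-approximation, and $S$ is feasible because we stop only when the working set is strongly consistent. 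Everything therefore reduces to implementing the ``find a conflict quartet, delete it, repeat'' loop fast.

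First I would record two structural facts that let a single sweep suffice. (1) Deleting a point never creates a new conflict quartet: any conflict quartet of $(\mathcal{M};X'\setminus\{x\})$ involves four points of $X'$ with unchanged pairwise distances, hence is already a conflict quartet of $(\mathcal{M};X')$. So conflicts only disappear as we delete, and a region once cleared stays clear. (2) Fix the reference metric $\rho_1$ and sort the $\binom{n}{2}$ pairs by $\rho_1$-distance into $v_1\preceq v_2\preceq\cdots$; then $(\mathcal{M};X')$ is strongly consistent iff for every \emph{consecutive} surviving pair $v_j,v_{j+1}$ the relation between them under $\rho_1$ (which is ``$=$'' or ``$<$'') equals their relation under every other $\rho_i$. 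Indeed, if all consecutive relations transfer then by concatenation so does every relation, so all metrics agree everywhere; conversely a disagreement between two pairs under any two metrics forces, since $\rho_1$ must then disagree with one of them somewhere, a consecutive disagreement with $\rho_1$ along the chain joining those two pairs in $\rho_1$-order.

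Next I would set up the data structures. Sort the $\binom{n}{2}$ pairs under each of the $k$ metrics --- $O(kn^2\log n)$ time --- and keep $\rho_1$'s sorted order as a doubly linked list $L$ in which each node also records whether it is tied with its successor; in addition, for each point $x\in X$ keep pointers to the $n-1$ nodes of $L$ whose pair contains $x$, and store each $\rho_i(\cdot,\cdot)$ in an array for $O(1)$ lookup. Then sweep a pointer along $L$: at the current node with successor, letting the two pairs be $(a,b)$ and $(c,d)$, compare in $O(k)$ time their $\rho_1$-relation against their relation under $\rho_2,\dots,\rho_k$. If all agree, advance. If some $\rho_i$ disagrees, output the at most four distinct points $\{a,b,c,d\}$ into $S$; for each of them splice its $n-1$ incident nodes out of $L$ via the stored pointers (skipping nodes already spliced), reposition the pointer to the nearest surviving predecessor of its old position, and continue forward. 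By fact (1) the part of $L$ already passed remains conflict-free, so one forward sweep terminates with a strongly consistent working set. The sweep makes $O(n^2)$ forward steps at $O(k)$ each; there are at most $n/3$ deletion rounds, each costing $O(n)$ for splicing plus $O(1)$ backtracking, so the loop costs $O(kn^2)$ on top of the sort, for $O(kn^2\log n)$ overall.

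The main obstacle is exactly this running-time improvement: correctness and the ratio of $4$ are the standard disjoint-hitting-set argument, but getting from $O(kn^4)$ to $O(kn^2\log n)$ needs the combination of (a) the consecutive-pair characterization, so that a single sorted order certifies strong consistency in $O(kn^2)$, and (b) the ``deletions only destroy conflicts'' fact, so that after each of the up to $\Theta(n)$ removals we resume the sweep in place by splicing a linked list, rather than re-sorting or rescanning. A minor detail to handle in the write-up is that a conflict quartet may be supported on only three distinct points; this is harmless, since it still involves at most four points and, by the order characterization above, must still be hit by every feasible solution.
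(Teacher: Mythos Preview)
Your proposal is correct and follows the same overall strategy as the paper: reduce to greedily hitting disjoint conflict quartets (giving the factor $4$), and use the sorted order of pairs to locate each new conflict without rescanning, for $O(kn^2\log n)$ total. The implementation of the sweep differs slightly: the paper sorts all $k$ metrics into lists $L_1,\dots,L_k$ and advances $k$ pointers in lockstep, declaring a conflict whenever the pair under $\myp_1$ differs from the pair under some $\myp_i$; you instead keep only $\rho_1$'s order as a doubly linked backbone and, via your consecutive-pair characterization, test each adjacent pair against the other metrics by direct value lookup, splicing out deleted points. Both are sound; your variant is arguably cleaner (one list instead of $k$), and in fact the extra $k-1$ sorts you perform are never used, so your sweep would already run in $O(n^2\log n + kn^2)$ if you dropped them.
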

    \begin{proof}
    Let $S^*$ denote the minimum inconsistent set so that $X \setminus S^*$ is the maximum consensus for metric system $(\mathcal{M}; X)$. 
    We propose Algorithm \ref{alg:strong_MSI}, which will compute a set $S$ to be removed as inconsistent set. 
         \begin{algorithm}[h] % force placement with [H]
	        \caption{$S$ = \SMIS($\mathcal{M} = (\Metric_1, \cdots, \Metric_k), X$)}
		    \label{alg:strong_MSI}
		    \begin{algorithmic}[1]
		    \For{each metric $\Metric_i \in \mathcal{M}$} 
    		    \State Sort all pairwise distances based on the distances in ascending order. Let $L_1, \cdots, L_k$ denote those $k$ sorted lists.
		    \EndFor
    		\State Initialize $S = \emptyset.$ $k$ pointers $\myp_1, \cdots, \myp_k$ pointing to the head of $L_1, \cdots, L_k$.
	    	\While{None of $\myp_i$s points out of bound.}
	    	  \While{$E(\myp_1) \cap S \not= \emptyset$} 
	    	  \State $\myp_1 = \myp_1 + 1$  
		      \EndWhile \Comment{Move the pointer of the first list.}
		      \State flag = {\texttt{False}}
		      \For{$i \in [2 \cdots k]$}
		      \While{$E(\myp_i) \cap S \not= \emptyset$}   %\Comment{Move pointers till the edge has no intersection with $S$.}
		       \State $\myp_i = \myp_i + 1$  
		       \EndWhile
		       \If{$\myp_1, \myp_i$ in bound $\AND E(\myp_i) \not= E(\myp_{1})$}
		            \State $S = S \cup \{E(\myp_i), E(\myp_{1})\}$
		        \State flag = {\texttt{True}}             \Comment{Found a conflict}
		       \EndIf
		    \State $\myp_i = \myp_i + 1$ \Comment{Move the pointer to the next since it is already considered.}
		    \If{flag = {\texttt{True}}}   \Comment{If there is a conflict, move on to the next iteration.}
		    \State \textbf{break}
		    \EndIf
		    \EndFor
		\EndWhile
    		\State \textbf{return $S$}
    	\end{algorithmic}
    \end{algorithm}
        
        On the high level, Algorithm \ref{alg:strong_MSI} starts with sorting all pairs of points based on their distances (If two pairs have the same distance, then they are sorted in lexicographical order). 
        With $k$ sorted lists of pairs $L_1,\ldots, L_k$, we set $k$ pointers to the heads of these lists: Let $\myp_i$ be the pointer to the list of pairwise distances of metric $\rho_i$, and $E(\myp_i)$ denote the pair of data points in $X$ that $\myp_i$ is pointing to. 
        As pointers move down these lists, it checks whether all pointers are pointing to the same pair of points from $X$. If that is the case, it moves on. Otherwise, if $E(\myp_i) \neq E(\myp_{1})$ (line-14), it means a conflict quartet is discovered, formed by the two pairs $E(\myp_i)$ and $E(\myp_{1})$. In this case, we will remove all these points (by adding them to $S$), and move on. 
        The code from lines 6-8 and 11-13 is to skip all pairs which contain at least one point from the outlier set $S$. 
        The procedure ends when any pointer moves out of bound (i.e., beyond the last element of the list).
        An illustration of how the algorithm works is given in Figure \ref{fig:SMIS_4_approx_alg} in Appendix \ref{app:missing_figure_MSI_4approx_alg}.

        \noindent \textbf{Time complexity.}
        There are $k$ metrics and $n \choose 2$ pairs. The sorting process takes $O(k n^2 \log n)$ time.
        Since the procedure ends when each pointer reaches the tail, the pointers will be moved for at most $k n^2$ total times. During each pointer move, the most expensive step is line-11, which can be done in $O(1)$ with an array of length $n$ indicating if a node is in $S$ or not. 
        
        Putting everything together, the total time complexity for our algorithm is $O(kn^2 \log n)$.
        
        \noindent \textbf{Correctness of algorithm.}
        We consider the points ever added to set $S$. 
        Note that $S$ is only updated in line-12, where there is a conflict (i.e, the pair $E(\myp_i) = (a, b)$ pointed by $p_i$ in list $L_i$ is different from the pair $E(\myp_{1}) = (c,d)$ by the pointer for list $L_{1}$). Assume that $(a, b)$ is smaller than $(c, d)$ for lexicographical order. 
        We claim that $\{(a,b), (c,d)\}$ form a conflict quartet. 
        To prove this, first, note that as we skip all pairs that contain any element from $S$ (lines 6-8, 11-13), this means that $S \cap \{a, b, c, d\} = \emptyset$. 
        Hence in list $L_{1}$, we have not yet seen (scanned) the pair $(a,b)$ -- as otherwise, at the time when the pointer in $L_{1}$ reaches $(a,b)$, if at that moment the pointer in $L_i$ is not pointing to $(a,b)$, we would have already seen a conflict and added $a, b$ to $S$. 
        It then follows that w.r.t. metric $\Metric_{1}$, we have that $\Metric_{1}(a,b) > \Metric_{1}(c,d)$. 
        On the other hand, in list $L_i$, it must be that we have not yet seen $(c,d)$ by the same reasoning, meaning that $\Metric_i(a,b) \le \Metric_i(c,d)$ w.r.t. metric $\Metric_i$. Hence these two pairs form a conflict quartet. 
        Obviously, for any conflict quartet, the minimum inconsistent set $S^*$ has to contain at least one element from it. 
        Furthermore, since all the conflict quartet the algorithm ever identifies are disjoint. This means that the $|S^*| \ge |S|/4$, that is, $|S| \le 4|S^*|$. 
        
        Finally, consider the ordered sublist $\widehat{L}_i$ of $L_i$, obtained by removing from $L_i$ all pairs that intersect $S$. 
        Then it is easy to see that by construction of the algorithm, all $\widehat{L}_i$s are the same. In other words, after removing all elements in $S$, the remaining points $X \setminus S$ form a consensus subset for the $k$ metrics $\mathcal{M} = \{\Metric_1, \ldots, \Metric_k\}$. Hence $S$ is a valid inconsistent set and $|S| \ge |S^*|$. 
        It follows that $S$ is a 4-approximation of the minimum inconsistent subset for the metric system $(\mathcal{M}; X)$. 
        
     \end{proof}

    \section{\WMIS problem} \label{sec:weakly_incon_set}
    
    We now focus on the \WMIS problem of finding a minimum weakly inconsistent subset.  
    In Theorems \ref{thm:MWI_3_line_metrics} and \ref{thm:MWI_3_ultra_NP}, we show that it is NP-complete for the special case of only three input line metrics or ultrametrics. We provide a straightforward 6-approximation algorithm at the end.  
    
    \begin{figure}[thbp]
        \centering 
        \includegraphics[width=0.8\textwidth]{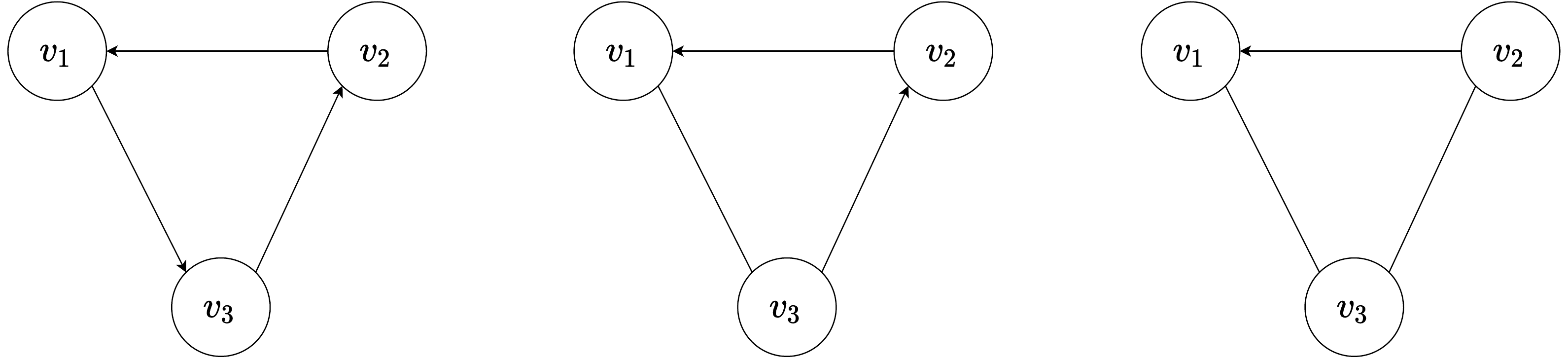}
        \caption{Three possible directed triangles.}
        \label{fig:directed_triangles}
    \end{figure}
    By definition \ref{def:weakly_consistent_met} and \ref{def:MSI_MWI}, if $X \backslash S$ is a consensus set, then the auxiliary graph $\mathcal{G}$ restricted on $X \backslash S$  must contain no directed cycle. 
    It is well known that a tournament (fully connected directed graph) has a directed cycle if and only if it has a directed triangle \cite{3ordinal_directed_tri_Dom2010}. It turns out that a similar result holds for auxiliary graphs, which are mixed graphs: see the claim below. 
See Figure \ref{fig:directed_triangles} for the 3 possible cases of directed triangles.
The simple proof of this claim can be found in Appendix \ref{app:prove_claim_3_directed_tri}.
    \begin{claim}\label{claim:3_directed_tri}
        An auxiliary graph $\mathcal{G}$ has no directed cycle if and only if it has no directed triangle.
    \end{claim}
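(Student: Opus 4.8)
The plan is to adapt the classical argument for tournaments — in a tournament a shortest directed cycle has length three — to the mixed-graph setting. The forward direction is immediate: a directed triangle is, by definition, a directed cycle of length three, so if $\mathcal{G}$ has no directed cycle it certainly has no directed triangle. All the content lies in the converse, which I would prove in contrapositive form: assuming $\mathcal{G}$ contains a directed cycle, I would exhibit a directed triangle.

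First I would take a directed cycle $C = \langle v_1, v_2, \ldots, v_r, v_1\rangle$ of minimum length $r$ (a minimum-length one is necessarily simple, since a repeated vertex would split $C$ into two shorter closed walks, one of which inherits the orientation and still carries an originally-directed edge). I fix an orientation of the undirected edges of $C$ making it into a fully directed cycle $v_1 \to v_2 \to \cdots \to v_r \to v_1$; by definition of a directed cycle of a mixed graph, at least one edge of $C$ is genuinely directed in $\mathcal{G}$. If $r = 3$ we are done, so I would assume $r \ge 4$ for contradiction. Since $\mathcal{G}$ is fully connected, there is a chord edge $e$ joining the distinct, non-adjacent vertices $v_1$ and $v_3$.

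Next I would consider two candidate shorter cycles built from this chord: $C' = \langle v_1, v_3, v_4, \ldots, v_r, v_1\rangle$ of length $r-1$, which can be made fully directed precisely when $e$ may be oriented $v_1 \to v_3$; and the triangle $C'' = \langle v_1, v_2, v_3, v_1\rangle$, which can be made fully directed precisely when $e$ may be oriented $v_3 \to v_1$. Every non-chord edge of $C'$ or of $C''$ inherits its orientation from $C$, so orientability is never an issue away from $e$, and at least one of $C'$, $C''$ is orientable no matter whether $e$ is directed one way, the other way, or undirected. It remains to check that the cycle one picks still contains a genuinely directed edge: the edge set of $C$ splits into $\{v_1v_2, v_2v_3\}$ (which sits inside $C''$) and $\{v_3v_4, \ldots, v_rv_1\}$ (which sits inside $C'$), so the directed edge of $C$ lies in $C''$ or in $C'$ accordingly; and if $e$ is itself directed, then whichever of $C'$, $C''$ uses $e$ in the matching direction already has a directed edge. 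In every case we obtain a directed cycle strictly shorter than $C$ — of length $r-1$ or of length $3$ — contradicting minimality, so $r = 3$, i.e. $C$ is a directed triangle.

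The only genuinely delicate step, and the one I would be most careful with, is the bookkeeping in the previous paragraph: one must verify that the shortcut cycle chosen is simultaneously (i) orientable into a fully directed cycle and (ii) still carries at least one originally-directed edge, across the three states of the chord $e$ (directed toward $v_3$, directed toward $v_1$, undirected) combined with the two possible locations of $C$'s directed edge. Once this short case check is written out cleanly the claim follows, and in fact the argument shows a bit more — that any directed cycle in a fully connected mixed graph can be shrunk down to a directed triangle.
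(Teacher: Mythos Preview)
Your proof is correct and follows essentially the same route as the paper's: take a minimum-length directed cycle in the fully connected mixed graph, introduce a chord, and argue that one of the two resulting subcycles is a shorter directed cycle, contradicting minimality. The only cosmetic difference is that the paper first relabels so that the edge $u_0u_1$ is the genuinely directed one and then takes the chord $u_1u_j$; this guarantees the triangle piece $\{u_0,u_1,u_j\}$ always contains the directed edge $u_0u_1$, so the only case requiring a non-chord directed edge is handled automatically, whereas your choice of the chord $v_1v_3$ without first anchoring the directed edge forces the extra bookkeeping you flag in your last paragraph.
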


    Hence to see whether there is any directed cycle in the auxiliary graph, one only needs to check if there is any directed triangle. 
    
Followed by claim \ref{claim:3_directed_tri}, the decision version of \WMIS is in NP.
\begin{lemma} \label{lem:wmis_in_NP}
The decision version of \WMIS is in NP.
\end{lemma}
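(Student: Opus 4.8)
The plan is to exhibit a polynomial-size certificate together with a polynomial-time verification procedure. The natural certificate for a ``yes'' instance $(\mathcal{M}, X, a)$ is simply the outlier set $S \subseteq X$ itself; it has size at most $|X| = n$, so it is polynomially bounded. The verifier first checks that $|S| = a$, which is trivial, and then must confirm that $\mathcal{M}$ restricted to $X \setminus S$ is weakly consistent in the sense of Definition \ref{def:weakly_consistent_met}.

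To do the latter, I would have the verifier explicitly construct the relation set $\mathcal{R}(X \setminus S)$ as in Definition \ref{def:3relation_set_of_pd}: there are at most $\binom{n}{2}$ pairs from $X \setminus S$, hence $O(n^4)$ ordered couples of such pairs, and for each couple the verifier tallies, over the $k$ input metrics, how many vote ``$<$'', ``$=$'', ``$>$'', recording the plurality winner (breaking ties toward the metric of smallest index). This takes $O(kn^4)$ time. From $\mathcal{R}(X \setminus S)$ the verifier builds the auxiliary mixed graph $\mathcal{G}(X \setminus S)$ of Definition \ref{def:auxiliarygraph}, which has $O(n^2)$ nodes and $O(n^4)$ edges, again in polynomial time.

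Finally, by Definition \ref{def:weakly_consistent_met}, weak consistency of $\mathcal{M}$ on $X \setminus S$ is exactly the statement that $\mathcal{G}(X \setminus S)$ has no directed cycle, and by Claim \ref{claim:3_directed_tri} this is equivalent to $\mathcal{G}(X \setminus S)$ having no directed triangle. So the verifier enumerates all triples of nodes of $\mathcal{G}(X \setminus S)$ — there are $O(n^6)$ of them — and for each triple tests whether the three connecting edges form one of the three directed-triangle patterns of Figure \ref{fig:directed_triangles}; it accepts if and only if no such triple is found. The whole verification runs in time polynomial in $n$ and $k$, so \WMIS is in NP.

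The proof has essentially no obstacle once Claim \ref{claim:3_directed_tri} is available; the only point that is not completely immediate is that detecting a directed cycle in a mixed graph could a priori require choosing orientations for the undirected edges and is therefore not obviously a polynomial-time test. Claim \ref{claim:3_directed_tri} circumvents this by reducing the cycle test to a purely local triangle check, and everything else is routine bookkeeping.
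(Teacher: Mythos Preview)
Your proof is correct and follows essentially the same approach as the paper: use the outlier set $S$ as certificate, build the auxiliary graph on $X\setminus S$, and invoke Claim~\ref{claim:3_directed_tri} to reduce the directed-cycle test to enumerating all triangles. The paper's proof is a two-line version of exactly this argument; your added bookkeeping on the $O(kn^4)$ and $O(n^6)$ running times is accurate but not something the paper spells out.
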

\begin{proof}
    By definition \ref{def:weakly_consistent_met} and claim \ref{claim:3_directed_tri}, one can check whether metrics from $\mathcal{M}$ are weakly consistent on a set $S$ by iterating over all triangles in the auxiliary graph. It is clearly polynomial.
\end{proof}

When there are two metrics, with the tie-breaking rule defined in Definition \ref{def:3relation_set_of_pd}, it is clear that we would always prefer the first metric (once the order of input metrics is fixed). Thus in this case, the minimum inconsistent set is simply $\emptyset$. The problem of \WMIS becomes non-trivial when there are three metrics.
 
 Our first main result is as follows, with proof in Appendix \ref{app:prove_MWI_3_line_mets}.
    \begin{theorem} \label{thm:MWI_3_line_metrics}
        Given a metric system $(\mathcal{M}; X)$, where $\mathcal{M} = \{\Metric_1, \Metric_2, \Metric_3\}$ contains three line metrics. The decision version of \WMIS is weakly NP-complete.
    \end{theorem}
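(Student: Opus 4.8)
The plan is to establish hardness by reduction, since membership in NP is already covered by Lemma~\ref{lem:wmis_in_NP}, and by Claim~\ref{claim:3_directed_tri} it is enough to reason about directed \emph{triangles} rather than arbitrary directed cycles. Because the statement claims only \emph{weak} NP-completeness, the natural source is a weakly NP-complete number problem, e.g.\ \textsc{Partition} (or \textsc{Subset-Sum}): given positive integers $a_1,\dots,a_n$ with $\sum_i a_i = 2T$, decide whether some $I\subseteq[n]$ has $\sum_{i\in I} a_i = T$. Line metrics are well suited to encoding such arithmetic because a line metric is fixed by the coordinates of its points (in a spirit similar to the line-metric construction behind Theorem~\ref{thm:MSI_2_line_mets}), and a ``weak'' reduction is permitted to use exponentially large coordinates.

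For the reduction I would take a point set $X$ built from one constant-size \emph{choice gadget} per element $a_i$ together with a single constant-size \emph{global gadget}, equipped with three line metrics $\Metric_1,\Metric_2,\Metric_3$. The coordinates are chosen so that almost every pairwise comparison is \emph{unanimous} across the three metrics --- so the plurality relation of Definition~\ref{def:3relation_set_of_pd} merely copies that common order and never enters a directed cycle --- while a small family of ``critical'' pairs is placed so that their three metric values are cyclically interleaved in the Condorcet pattern: metric~$1$ ranks a critical triple $A<B<C$, metric~$2$ ranks it $C<A<B$, and metric~$3$ ranks it $B<C<A$, so the plurality order on $\{A,B,C\}$ is the $3$-cycle $A<B<C<A$, i.e.\ a directed triangle. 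Each choice gadget for $a_i$ has two ``switch'' points, with the property that deleting one of the two (and which of the two) toggles whether $a_i$ is selected, and selecting $a_i$ displaces the relevant critical coordinates by an amount proportional to $a_i$; the global gadget is calibrated so that the critical triple closes into a directed triangle exactly when $\sum_{i\in I} a_i\neq T$. With the deletion budget set to $a:=n$ (one forced deletion per choice gadget), this yields: there is $S\subseteq X$ with $|S|\le a$ and $X\setminus S$ weakly consistent iff the \textsc{Partition} instance is a YES instance. Throughout, the tie-breaking convention (metric~$1$ wins a $1$-$1$-$1$ tie) has to be honored --- critical comparisons should be kept strict, or the tie-break invoked deliberately --- so that the plurality order is exactly what the gadgets intend.

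The correctness proof then has the usual two directions, both routed through Claim~\ref{claim:3_directed_tri}. For completeness, from a balanced $I$ one deletes from each choice gadget the switch point realizing the selection $\sum_{i\in I} a_i = T$, and then checks that with this coordinate pattern no surviving triple of critical pairs is a directed triangle and no mixed triple (critical together with unanimous pairs) is one either, so the auxiliary graph on $X\setminus S$ is directed-triangle-free. For soundness, I would use a packing argument: while both switch points of a choice gadget are present, that gadget already supports a directed triangle on its own constant-size vertex set --- and the gadget is designed so that only deleting a switch point can destroy it --- so any valid $S$ spends at least one deletion per gadget; with $|S|\le n$, $S$ consists of exactly one switch point per gadget, hence names a selection $I$, and then the absence of the global directed triangle forces $\sum_{i\in I} a_i = T$.

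The crux --- where essentially all the work lies --- is ruling out \emph{spurious} directed triangles, i.e.\ designing the coordinates so that the only directed triangles that can ever arise are the intended ``global'' ones. This requires (i) spreading the unanimous pairs over coordinate windows separated widely enough that they cannot be cyclically interleaved with the critical pairs, or with one another, under any of the three metrics; (ii) verifying that neither a single choice gadget (beyond its intended internal triangle) nor a pair of distinct choice gadgets ever produces an unintended Condorcet cycle, for every combination of choices; and (iii) tracking the plurality tie-breaking rule at every critical comparison. Making all of this hold at once while keeping the coordinates only exponentially large --- which is also what forces the ``weak'' qualifier --- is the delicate part; the remainder is bookkeeping.
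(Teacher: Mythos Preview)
Your proposal has a structural obstacle that, as written, I do not see how to get around. In the \WMIS problem the three line metrics are \emph{fixed} once their coordinates are chosen; deleting points from $X$ does not ``displace'' any coordinate or change any surviving pairwise distance. Consequently, whether three auxiliary-graph nodes $w_1,w_2,w_3$ form a directed triangle is a property of the (at most six) points of $X$ they name and nothing else: the triangle is present in the restricted auxiliary graph if and only if all of those $\le 6$ points survive. A constant-size ``global gadget'' therefore cannot test an $n$-way arithmetic condition like $\sum_{i\in I}a_i=T$; its critical triangle either always exists (and is killed only by deleting one of its own six points) or never does, regardless of which switch points were removed in the $n$ choice gadgets. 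The telescoping/accumulation you would need simply cannot be routed through a single pair $(p,q)$, since $\rho(p,q)$ depends on two coordinates only. Your soundness direction breaks exactly here: once one switch point per gadget has been deleted, the global triangle's status is unaffected by \emph{which} one was deleted.

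For contrast, the paper does \emph{not} reduce from a number problem at all; it reduces from \minVertexCover, mirroring the two-line-metric argument behind Theorem~\ref{thm:MSI_2_line_mets}. Each vertex $v_i$ becomes a point $\widehat v_i$, and each edge $e=(v_i,v_j)$ contributes four pivot points $r_{e,l_1},r_{e,l_2},r_{e,r_1},r_{e,r_2}$. Coordinates are placed on a power-of-two grid so that the only comparable triples are $\{(\widehat v_i,\widehat v_j),(r_{e,l_1},r_{e,r_1}),(r_{e,l_2},r_{e,r_2})\}$, and the three metrics are perturbed by $\pm\epsilon,\pm 2\epsilon$ so that precisely these triples form directed (Condorcet) triangles, one per edge. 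Then an inconsistent set of size $K$ exists iff $G$ has a vertex cover of size $K$. The ``weak'' in weakly NP-complete comes not from a numeric source problem but from the exponential magnitude of the power-of-two coordinates needed to isolate the comparable triples. If you want to stay close to your write-up, the fix is to abandon the sum-testing global gadget and instead manufacture one directed triangle per edge as the paper does.
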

    
    Our second main result is the hardness for ultrametrics. 
    \begin{theorem}\label{thm:MWI_3_ultra_NP} 
        Given a metric system $(\mathcal{M}; X)$, where $\mathcal{M} = \{\Metric_1, \Metric_2, \Metric_3\}$ contains three ultrametrics. The decision version of \WMIS is NP-complete.
    \end{theorem}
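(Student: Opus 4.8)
By Lemma~\ref{lem:wmis_in_NP} the decision version of \WMIS with three ultrametrics is in NP, so the task is to prove NP-hardness. It is convenient to first restate the problem. Since deleting points from $X$ only removes vertices (pair-nodes) from the auxiliary graph and never changes the plurality relation between two surviving pairs, the auxiliary graph of $(\mathcal{M};X\setminus S)$ is exactly the subgraph of the auxiliary graph of $(\mathcal{M};X)$ induced on pairs avoiding $S$; combined with Claim~\ref{claim:3_directed_tri}, the \WMIS problem is therefore ``delete a minimum set of points so that the plurality auxiliary graph contains no directed triangle'' (a hitting-set over directed triangles, which also explains the $6$-approximation mentioned later). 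The plan is to realize an NP-hard deletion instance inside this framework by a polynomial-time reduction from \maxTwoSAT, mirroring the two-ultrametric reduction behind Theorem~\ref{thm:MSI_2_ultra_NP}; a reduction from \minVertexCover or \tournamentFeedbackVertexSet would be an alternative, but \maxTwoSAT matches the ``base cost per variable plus one per unsatisfied clause'' accounting most cleanly.

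Given a \maxTwoSAT formula $\varphi$ with variables $z_1,\dots,z_n$ and clauses $C_1,\dots,C_m$, I would build a point set $X$ consisting of a constant-size \emph{variable gadget} $V_i$ for each $z_i$, a constant-size \emph{clause gadget} $W_j$ for each $C_j$, and possibly a small pool of auxiliary points used only to pin down relative orders. Next I would define three representing trees $T_1,T_2,T_3$ on $X$ with small integer height labels, yielding three ultrametrics $\rho_\ell(x,x') = h_\ell(\mathrm{LCA}_\ell(x,x'))$, engineered so that: (i) each variable gadget $V_i$, on its own, is already weakly inconsistent --- its pairs span a directed triangle --- and its minimum-size repairs are exactly two, of equal size, corresponding to $z_i=\mathrm{true}$ and $z_i=\mathrm{false}$; (ii) the points and pairs of $W_j$ are wired into the gadgets of the (at most two) variables of $C_j$ so that $W_j$ carries a directed triangle precisely when the chosen repairs of those variable gadgets set every literal of $C_j$ to false, and this clause triangle is otherwise already destroyed by one of the variable repairs; and (iii) across different gadgets, every pair-to-pair comparison is decided unanimously by the three trees (e.g.\ by the magnitude of the relevant LCA heights), so it is either undirected or has a single fixed consistent direction and can never lie on a directed triangle. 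Setting the budget $B := (\text{cost of repairing all variable gadgets}) + (m-t)$ then makes ``there is a weakly inconsistent set of size $\le B$'' equivalent to ``$\varphi$ has an assignment satisfying at least $t$ clauses''.

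For correctness, the forward direction is routine: an assignment satisfying $\ge t$ clauses yields the deletion set made of the corresponding canonical repair of each $V_i$ plus one dedicated point per unsatisfied clause, and properties (i)--(iii) show the surviving auxiliary graph has no directed triangle, hence (via Claim~\ref{claim:3_directed_tri} and Definition~\ref{def:weakly_consistent_met}) is weakly consistent. The reverse direction is the delicate bookkeeping part: starting from an arbitrary weakly inconsistent set $S$ with $|S|\le B$, I would run an exchange argument showing that replacing the restriction of $S$ to each variable gadget by one of that gadget's two canonical repairs neither increases $|S|$ nor creates a new directed triangle, so $S$ may be assumed to be in ``canonical form''; it then reads off a truth assignment, and the points of $S$ outside the variable gadgets must hit at least one directed triangle for each clause left unsatisfied, forcing at least $t$ satisfied clauses.

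\textbf{Main obstacle.} The crux is jointly achieving realizability and property (iii): ultrametrics are far more rigid than arbitrary metrics --- every triple of points is ``isosceles'' (the two largest of its three pairwise distances coincide), and all distances are LCA heights in a tree with heights non-increasing toward the root --- so constructing three such trees whose plurality graph contains exactly the intended directed triangles, with \emph{provably no spurious ones} among the quadratically many cross-gadget comparisons, is the hard part. The approach I would take is to place the gadgets in well-separated height bands in all three trees, so that any pair straddling two gadgets compares unanimously and trivially, confining all nontrivial plurality behavior to within a single variable gadget or a single clause-plus-its-variables bundle; the spurious-triangle check then collapses to a finite case analysis on constant-size configurations. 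Finally, because all heights can be taken to be small integers in trees of polynomial depth, the numbers stay polynomially bounded, so the hardness obtained is genuine NP-completeness rather than merely the weak NP-completeness of the line-metric case (Theorem~\ref{thm:MWI_3_line_metrics}).
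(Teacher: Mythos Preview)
Your proposal is a coherent strategy outline, but it is not a proof: the entire construction is deferred to ``I would build'' and ``I would define,'' and you yourself flag the crux --- realizing three ultrametrics whose plurality auxiliary graph has exactly the intended directed triangles and provably no spurious ones --- as the main obstacle without resolving it. The height-band separation idea is plausible for killing cross-gadget triangles, but the within-gadget design (a constant-size configuration of points and three labeled trees whose plurality graph has a directed triangle breakable in exactly two equal-cost ways encoding true/false, plus clause wiring that triggers a further triangle exactly when both literals are false) is the actual content of the reduction and is entirely absent. Until those trees are written down and the finite case analysis is carried out, there is no theorem; ``mirroring the two-ultrametric reduction behind Theorem~\ref{thm:MSI_2_ultra_NP}'' is not automatic, since strong inconsistency is witnessed by conflict \emph{quartets} whereas weak inconsistency requires directed triangles in the plurality graph, a structurally different object.

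For comparison, the paper takes a quite different route: it reduces from \threeDimensionalMatching rather than \maxTwoSAT. The point set has $5m$ elements --- four ``anchor'' blocks $\myA,\myB,\myC,\myD$ of size $m$ each, plus one point $\rhat_j$ per triple $\myr_j\in\matchS$ --- and all three representing trees share the \emph{same} shape; only the height functions differ, with small integer heights found by an exhaustive computer search over permutations. The key lemma states that a pair $(\rhat_i,\rhat_j)$ lies on a directed triangle if and only if $\myr_i$ and $\myr_j$ share a coordinate, so a weak consensus of size $4m+K$ exists iff a 3D matching of size $K$ exists. This sidesteps variable/clause gadgetry entirely: there is no per-variable binary choice to encode, and the exchange argument you anticipate for the reverse direction is replaced by a one-line count ($|P'\setminus\myR|\le 4m$ forces $|P'\cap\myR|\ge K$, and those $\rhat$'s are pairwise coordinate-disjoint). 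That even this simpler design required machine search for the heights suggests your hand-built \maxTwoSAT gadgets, if feasible at all, will not come for free.
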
 
    \begin{proof}[Proof of Theorem \ref{thm:MWI_3_ultra_NP}]
        We prove this theorem via a reduction from the so-called \threeDimensionalMatching problem. 
    In particular, instead of the problem of finding the minimum inconsistent set, we will consider the equivalent dual version of finding a maximum consensus for a set of 3 ultrametrics. 
    
    \noindent \textbf{Description of the reduction.}
    Suppose we are given an instance of \threeDimensionalMatching problem ($X, Y, Z; \matchS \subseteq X \times Y \times Z$), where $|X| = |Y| = |Z|$. Assume that $X = \{x_1, ..., x_n\}, Y = \{y_1, ..., y_n\}, Z = \{z_1, ..., z_n\}$, while $\matchS = \{\myr_1, ..., \myr_m\}$ where each relation $\myr_i$ is of the form $\myr_i = (x_a, y_b, z_c)$ with $x_a \in X, y_b \in Y$ and $z_c\in Z$. A matching $\Pi \subset \matchS$ is such that each element in $X \cup Y \cup Z$ can appear at most once in all relations in $\Pi$. 
    The decision version of the \threeDimensionalMatching problem is that, given $(X, Y, Z, \matchS)$ and an integer $K$, does there exist a matching $\Pi \subset \matchS$ such that $|\Pi| = K$?  
    
    From this instance $(X, Y, Z; \matchS)$ of the \threeDimensionalMatching problem, we will now construct an instance of the \WMIS problem $(\mathcal{M} = \{U_X, U_Y, U_Z\}, P)$, where the node set is $P = \{a_1, \ldots, a_m, b_1, \ldots, b_m, c_1, \ldots, c_m, \\ d_1, \ldots, d_m, \rhat_1, \ldots, \rhat_m\}$, and $\mathcal{M}$ consists of 3 ultrametrics, $U_X, U_Y$, and $U_Z$ over node set $P$. % \yusu{DK: in your previous definition of node set, you didn't have $d_1, \ldots, d_m$. I assume that was a mistake.} 
    (Note that we do not use $X$ as the node set as $X$ is already used in the instance of \threeDimensionalMatching problem.) 
     Recall that any ultrametric over node set $P$ corresponds to a representing tree, which is a rooted tree where all nodes have a height value, and all leaves (corresponding to node set $P$) have the same height. In what follows, we will describe the three representing trees $T_X, T_Y$ and $T_Z$, generating $U_X, U_Y$ and $U_Z$, respectively. 
    In particular, these three representing trees $T_X, T_Y$ and $T_Z$ all have the same tree shape. However, the height of those internal nodes will be different. 
    
    We will first describe the representing tree for $T_X$.  
    The root $root$ has 5 children, represented by $A, B, C, D$ and $\widehat{X}$. 
    Each of $A, B, C, D$ has exactly $m$ children, which are $\Child(A) = \{a_1, \ldots, a_m\}$, $\Child(B) = \{b_1, \ldots, b_m\}$, $\Child(C) = \{c_1, 
    \ldots, c_m\}$ and $\Child(D) = \{d_1, \ldots, d_m\}$, respectively. 
    Note that these children are leaves, corresponding to the first $4m$ nodes in the node set $P$ (See Figure \ref{fig:3DM_MWI_exp}). 
    The node $\widehat{X}$ has $n$ children, $\xhat_1, \ldots, \xhat_n$, corresponding to the $n$ points in input set $X$ of the \threeDimensionalMatching instance $(X, Y, Z; \matchS)$. 
    The child(ren) of each $\xhat_i$ is defined as: 
    $\Child(\xhat_i) = \{ \rhat_j \mid x_i \in \myr_j \}$; all children of $\xhat_i$s are all leaves. 
    See Figure \ref{fig:3DM_MWI_exp}. 
    
    Next, we assign height values for nodes in $T_X$. 
    All leaves (corresponding to elements in the node set $P$ where ultrametrics are defined on) have height $0$. The height values for the internal nodes are listed in the row corresponding to $T_X$ in Table \ref{tab:value_assigned}. 
    
    The representing tree $T_Y$ (resp. $T_Z$) has the same tree shape as $T_X$, and the only difference is that the node $\widehat{X}$ and $\xhat_i$s are replaced by $\widehat{Y}$ and $\yhat_i$s (resp. by $\widehat{Z}$ and $\zhat_i$s). See Figure \ref{fig:3DM_MWI_exp}. %\yusu{DK: some of these figures can be moved to appendix later if we need space.}  
    The height values of all leaves nodes are still $0$, and the height values of internal nodes are listed in the last two rows of Table \ref{tab:value_assigned}. Also see Figure \ref{fig:3DM_MWI_exp} where the height of each node is listed in the parenthesis next to each node. 
    
        \begin{table}[ht] 
            \centering
            \begin{tabular}{llllllll}
                & $root$  & $A$ & $B$ & $C$ & $D$ & $\widehat{X}(\widehat{Y}, \widehat{Z})$ & $\widehat{x}_i (\widehat{y}_i, \widehat{z}_i)$ \\
                $T_X$ & 10 & 5 & 4 & 3 & 1 & 2       & 0                \\
                $T_Y$ & 10 & 3 & 0 & 2 & 5 & 4       & 1                \\
                $T_Z$ & 10 & 2 & 4 & 0 & 1 & 5       & 3               
            \end{tabular}
            \caption{Height function values assigned to internal nodes.} \label{tab:value_assigned}
        \end{table}
        
        \begin{figure}[H]
            \centering 
            \includegraphics[width=0.85\textwidth]{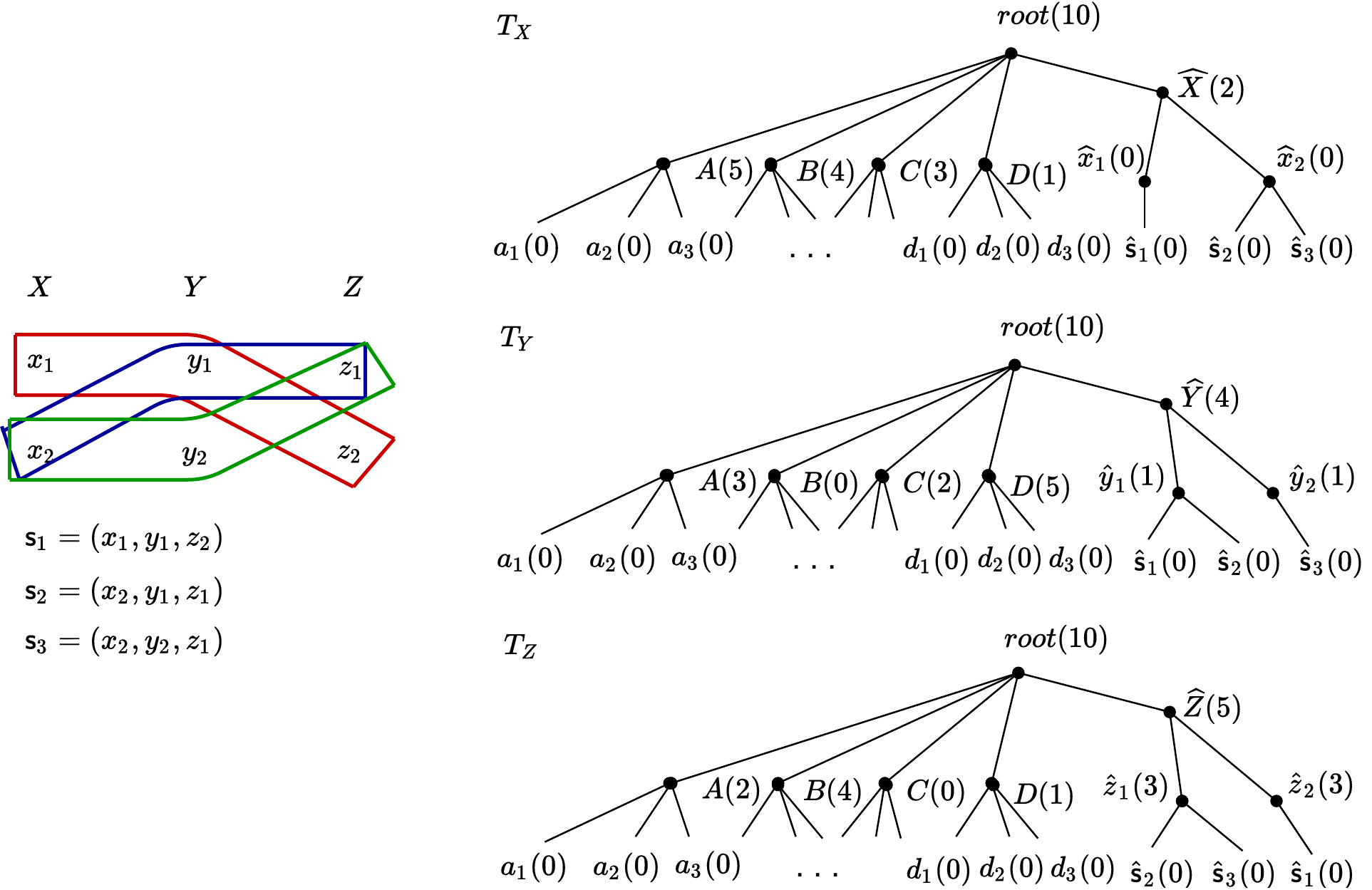}
            \caption{An instance of {\textsc{3D Matching}} problem, and the corresponding instance of \WMIS. The values in parentheses are the height values of tree nodes.}
            \label{fig:3DM_MWI_exp}
        \end{figure}
    
    This finishes setting up all three representing trees (thus also the ultrametrics $U_X, U_Y$ and $U_Z$). Recall that for each ultrametric say $U_X$, the distance $U_X (p, q)$, with $p, q \in P$, corresponds to the height value of the lowest common ancestor (\LCA) of leaves $p$ and $q$. 
    
    In what follows, we will first prove some properties of the constructed ultrametrics. 
    Specifically, consider the auxiliary graph $\mathcal{G}$ constructed for the metric system $(\mathcal{M} = \{U_X, U_Y, U_Z\}; P)$. Recall that each graph node in $\mathcal{G}$ corresponds to a pair of points from $P$, $(p, q)\in P \times P$. For simplicity, we use $\myA$ to represent the set $\{a_1, \ldots, a_m\}$, and similarly for $\myB, \myC$, $\myD$, and $\myR$. 
    Given a graph node $(p,q)$ of $\mathcal{G}$, we say that this pair \emph{splits} if $p$ and $q$ are from two different sets in $\{\myA, \myB, \myC, \myD, \myR\}$ (e.g, $p\in \myA$ and $q\in \myD$). 
    Now consider a triple $\rhat_i = (x', y', z') \in X \times Y \times Z$; we refer to $x'$ (resp, $y'$, $z'$) as the $x$-coordinate (resp. $y$- or $z$-coordinate) of $\rhat_i$. 
    Given a graph node of the auxiliary graph $\mathcal{G}$ of the form $(\rhat_i, \rhat_j)$, we say that this pair \emph{has shared coordinate}, if $\rhat_i \cap \rhat_j \neq \emptyset$. This means that $\rhat_i$ shares either $x$-, $y$- or $z$-coordinate. 
    This is the key lemma to guarantee the correctness of our reduction. The proof of this lemma can be found in Appendix \ref{appendix:lem:nomix}. 
    We remark that the height values of all nodes in the three representing trees are chosen carefully so that the lemma below holds. 
    %To compute these height values, we in fact write a computer program solving a linear programming of constant sizes (as there are constant number of internal nodes) to make sure all constraints needed are satisfied. \yusu{DK: check whether the previous sentence is correct -- modify if needed. There is also sth. similar in Appendix \ref{appendix:lem:nomix}.} 
    To compute these height values, we in fact write a computer program testing all possible permutations of heights over internal nodes to make sure all conditions in Lemma \ref{lem:nomix} are satisfied.
    
    \begin{lemma}\label{lem:nomix}
    Consider a graph node $(p, q) \in P \times P$ of the auxiliary graph $\mathcal{G}$. 
    \begin{itemize}
        \item[(i)] If $(p, q)$ splits, then this graph node cannot appear in any directed triangle in the auxiliary graph $\mathcal{G}$.  
        \item[(ii)] Any directed triangle in $\mathcal{G}$ must contain at least one graph node of the form $(\rhat_i, \rhat_j)$ where this pair have shared coordinate. 
        \item[(iii)] If $(p, q)$ is of the form $(\rhat_i, \rhat_j)$ and this pair has {\bf shared coordinate}, then this graph node $(\rhat_i, \rhat_j)$ must participate in at least one directed triangle. 
    \end{itemize}
    \end{lemma}

    Note that the correctness of the reduction then follows easily from the above key lemma. In particular, we now show that $(X, Y, Z; \matchS)$ has a matching $\Pi\subseteq \matchS$ of size $K$ if and only if the metric system $(\{U_X, U_Y, U_Z\}, P)$ has a consensus subset of size $4m + K$. 
    
    \noindent \textbf{``$\Rightarrow$'' direction:} Suppose $(X, Y, Z; \matchS)$ has a matching $\Pi = \{ \myr_{I_1}, \ldots, \myr_{I_K}\}$ of size $K$. Then we claim that the set 
    \begin{equation*}
        \begin{split}
            P' =& \{a_1, \cdots, a_m, b_1, \cdots, b_m, c_1, \cdots, c_m, d_1, \cdots, d_m, \rhat_{I_1}, \cdots, \rhat_{I_K}\} \\
        =& \myA \cup \myB \cup \myC \cup \myD \cup \{ \rhat_{I_1}, \cdots, \rhat_{I_K}\}
        \end{split}
    \end{equation*}
    forms a consensus subset of $P$ w.r.t. the metric system $(\{U_X, U_Y, U_Z\}; P)$. Specifically, by Claim \ref{claim:3_directed_tri}, we just need to show that the subgraph $\mathcal{G}'$ of the auxiliary graph $\mathcal{G}$ spanned by nodes coming from $P'\times P'$ contains no directed triangle. 
    As $\Pi$ is a valid matching, no two $\rhat_{I_i}$ and $\rhat_{I_j}$ can have shared coordinates. 
    It then follows from Lemma \ref{lem:nomix} (ii) that there cannot be any directed triangle in the subgraph $\mathcal{G}'$. 

    \noindent \textbf{``$\Leftarrow$'' direction:}  Suppose we have a consensus subset $P' \subset P$ for the metric system $(\{U_X, U_Y, U_Z\}; P)$ such that $|P'| = 4m+K$. 
    First, consider $P' \cap \myR = \{\rhat_{J_1}, \ldots, \rhat_{J_s} \}$. We know that the subgraph $\mathcal{G}'$ spanned by nodes from $P' \times P'$ contains no directed triangle. By Lemma \ref{lem:nomix} (iii), it then follows that no two $\rhat_{J_i}, \rhat_{J_j}$, $i, j\in [1, s]$, could have shared coordinate. In other words, the set $\{\rhat_{J_1}, \ldots, \rhat_{J_s}\}$ forms a valid 3D matching for $(X, Y, Z; \matchS)$ of size $s$. 
    On the other hand, we know that $|P' \setminus \myR| \le 4m$ (as the largest possible choise for $P' \setminus \myR$ is $\myA\cup \myB \cup \myC \cup \myD$). Since $|P'| = 4m+K$, it then follows that $s \ge |K|$, and thus there exists a 3D matching of $(X, Y, Z; \matchS)$ of size at least $|K|$. 

    As \threeDimensionalMatching is NP-complete, it then follows that the decision problem of \WMIS is NP-complete. This finishes the proof of Theorem \ref{thm:MWI_3_ultra_NP}.    
    \end{proof}
            % \yusu{DK: note that I changed various  notations in the above proof. You may need to change figure and table.} 
        
    The following theorem is an implication from the previous proof. Similarly as \SMIS, we also provide a direct proof for arbitrary metrics in Appendix \ref{app:prove_thm_weakNP_3metric}. In the proof, we construct a size-preserving reduction from \minVertexCover which again leads to an $(2 - \epsilon)$-inapproximability result. 
        
    \begin{theorem} \label{thm:MWI_arbitrary_3_mets}
          Given a metric system $(\mathcal{M}; X)$, where $\mathcal{M} = \{\Metric_1, \Metric_2, \Metric_3\}$ contains three arbitrary metrics, the decision version of \WMIS is NP-complete.
        
        Furthermore, \WMIS with 3 metrics is Unique Games-hard to approximate within a factor $2 - \epsilon$ for an arbitrarily small positive constant $\epsilon>0$. 
    \end{theorem}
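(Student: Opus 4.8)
The NP-completeness half of the statement requires no new work: every ultrametric is a metric, so Theorem~\ref{thm:MWI_3_ultra_NP} already yields NP-hardness for three arbitrary metrics, and membership in NP is Lemma~\ref{lem:wmis_in_NP}. Hence the substance is the ``furthermore'' clause, and the plan is to prove it by exhibiting a polynomial-time \emph{size-preserving} reduction from \minVertexCover\ to \WMIS\ on three metrics; once the optimal values match exactly, the Unique-Games hardness of approximating \minVertexCover\ within $2-\epsilon$ \cite{3ordinal_work_VC_inapprox_Subhash08} transfers immediately.

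\textbf{The reduction I would build.} Given a graph $G=(V,E)$ with $V=\{v_1,\dots,v_n\}$, fix an arbitrary orientation of each edge and introduce one auxiliary point $w_e$ per edge $e\in E$; the point set is $P=V\cup\{w_e:e\in E\}$, which has size $n+|E|$. I would then define three metrics $\rho_1,\rho_2,\rho_3$ on $P$ keeping every pairwise distance inside a narrow band $[W,2W)$ for a large integer $W$, so that the triangle inequality holds automatically for each $\rho_t$. For the edge $e=(v_i,v_j)$, call the pairs $(v_i,v_j),(v_i,w_e),(v_j,w_e)$ the \emph{gadget pairs of $e$}, and place them into a private length sub-band $\{W+3\ell_e,\,W+3\ell_e+1,\,W+3\ell_e+2\}$ (one per edge, pairwise disjoint) under a ``Condorcet rotation'': $\rho_1$ orders them $(v_i,v_j)>(v_i,w_e)>(v_j,w_e)$, $\rho_2$ orders them $(v_i,w_e)>(v_j,w_e)>(v_i,v_j)$, and $\rho_3$ orders them $(v_j,w_e)>(v_i,v_j)>(v_i,w_e)$. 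Every remaining (``cross'') pair of $P$ receives one common, distinct, large value, identical across the three metrics, chosen above all gadget sub-bands but still below $2W$.

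\textbf{Correctness argument I would carry out.} First I would verify that for each edge $e$ the three gadget pairs of $e$ induce a fully-directed triangle in the auxiliary graph $\mathcal{G}$ of $(\{\rho_1,\rho_2,\rho_3\};P)$: each of the three strict comparisons among them is decided $2$--$1$ by plurality, and the three winners close a directed $3$-cycle. For every other comparison (cross vs.\ cross, cross vs.\ gadget, and gadget-of-$e$ vs.\ gadget-of-$e'$ for $e\ne e'$) all three metrics agree because the values are strictly separated, so the relation is a single directed edge and, in particular, no tie ever arises, making the tie-breaking rule of Definition~\ref{def:3relation_set_of_pd} irrelevant. I would then argue that no directed triangle can involve a cross node or nodes from two different gadgets — such a node would have both of its incident triangle-edges oriented the same way (forced by the global band order), which is impossible in a directed cycle — so the directed triangles of $\mathcal{G}$ are precisely one per edge. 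Since restricting the metric system to $P\setminus S$ merely restricts $\mathcal{G}$ to the pairs inside $P\setminus S$, and the gadget triangle of $e=(v_i,v_j)$ survives iff $\{v_i,v_j,w_e\}\cap S=\emptyset$, Claim~\ref{claim:3_directed_tri} gives: $S$ is weakly inconsistent iff $S$ meets $\{v_i,v_j,w_e\}$ for every $e\in E$. Finally, a vertex cover $C\subseteq V$ is such a hitting set (so the optimum is at most $\tau(G)$), and conversely from any hitting set $S$ one builds $C=(S\cap V)$ plus one endpoint per edge not yet covered, charging each such endpoint to the distinct point $w_e\in S$ that must then be present; this yields a vertex cover with $|C|\le|S|$, and it is computable in polynomial time from any (approximate) inconsistent set. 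Hence the minimum weakly inconsistent set has size exactly $\tau(G)$, and a $(2-\epsilon)$-approximation for \WMIS\ with three metrics would give one for \minVertexCover.

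\textbf{Where the difficulty lies.} Everything above is routine except the one step that must be handled carefully: proving that the construction creates \emph{no} unintended directed triangle. The band design is tailored so that the only non-monotone comparisons are the $2$--$1$ Condorcet rotations inside a single gadget, while all other comparisons are strict and metric-independent; the delicate part is checking that this holds across all corner cases — a vertex shared by many edges, isolated vertices, disjointness of the gadget sub-bands, and the placement of the common cross value below $2W$ yet above all gadget sub-bands. This verification is analogous in spirit to Lemma~\ref{lem:nomix}, but considerably lighter since we are free to choose arbitrary metrics rather than ultrametrics.
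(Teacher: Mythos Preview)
Your proposal is correct and follows essentially the same approach as the paper: the paper also reduces from \minVertexCover\ by adjoining one auxiliary point $r_e$ per edge $e=(v_i,v_j)$, sets up the three metrics so that the only directed triangles in the auxiliary graph are exactly the triples $\{(\hat v_i,\hat v_j),(\hat v_i,r_e),(\hat v_j,r_e)\}$, and then argues that any optimal inconsistent set can be taken inside $\{\hat v_1,\dots,\hat v_n\}$, yielding a size-preserving reduction. Your band/Condorcet-rotation description is a concrete realization of the same gadget (the paper specifies its distances via a figure), and your charging argument for converting an arbitrary hitting set back to a vertex cover is the same observation the paper states as ``there is always an optimal solution only consisting of $\hat v_i$s.''
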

    
    % As we construct a size-preserving reduction from \minVertexCover to \WMIS, then by \cite{3ordinal_work_VC_inapprox_Subhash08}, we have
    
    % \begin{corollary}
    %     
 %   \WMIS with 3 metrics is Unique Games-hard to approximate within a factor $2 - \epsilon$. 
%    \end{corollary}
    
    %There exists a similar approximation algorithm as \SMIS but with a worse approximation factor, 
    Finally, there is a simple 6-approximation algorithm with running time $O(n^6)$: 
    Specifically, given a metric system $(\mathcal{M}; X)$ with $n = |X|$, we first build auxiliary graph $\mathcal{G}$ as described earlier in $O(kn^4)$ time.
    We want to construct an outlier set $S \subset X$ so that it ``hits'' all directed triangles in $\mathcal{G}$: note that this will then guarantee that $X \setminus S$ is a consensus set w.r.t. $\mathcal{M}$. 
    To this end, we simply enumerate all directed triangles in $\mathcal{G}$ in $O(n^6)$ time. We then initialize $S = \emptyset$ and go through the list of directed triangles one by one. 
    For each directed triangle $\Delta w_1 w_2 w_3$ (where each $w_i$ is a pair of points in $X$), if $S$ does not intersect with any of the points included in $w_1\cup w_2 \cup w_3 \subset X$, then we simply add all these points (at most 6 distinct points) to $S$. 
    Otherwise, this triangle is already ``hit'' by $S$ and we do nothing. 
    Let $S^*$ be the minimum weakly inconsistent set (i,e, the optimal solution for \WMIS). 
    It is easy to see that $|S^*| \ge |S|/6$, as all the 6-tuples (at most 6) we ever added to $S$ are all disjoint, and for each such 6-tuple, $S^*$ must contain at least one point from it. 
    Furthermore, it is also easy to see that after removing all points $S$, the resulting auxiliary graph restricted to only pairs not containing points in $S$ is free of directed triangles, and thus free of directed cycles by Lemma \ref{lem:validrelationset} and Claim \ref{claim:3_directed_tri}. Hence $X\setminus S$ is a consensus set w.r.t. $\mathcal{M}$, and $|S| \le 6|S^*|$. Hence: 
    \begin{theorem} \label{thm:MWI_6_approx}
    There is a 6-approximation algorithm for the \WMIS problem that runs in $O(kn^4 + n^6)$ time for $k$ metrics defined on a point set of size $n$. 
    \end{theorem}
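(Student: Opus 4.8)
The plan is to recast \WMIS as a hitting-set problem on a polynomial-size family of ``witnesses'' of inconsistency, and then apply the standard maximal-disjoint-witnesses greedy. First I would build the auxiliary graph $\mathcal{G}$ for the full metric system $(\mathcal{M};X)$ following Definitions \ref{def:3relation_set_of_pd} and \ref{def:auxiliarygraph}: for each of the $O(n^4)$ ordered pairs of graph nodes one determines the plurality relation among the $k$ metrics in $O(k)$ time, for $O(kn^4)$ total. By Claim \ref{claim:3_directed_tri}, a subset $Y\subseteq X$ is weakly consistent iff the induced sub(mixed)graph of $\mathcal{G}$ on the nodes $(y,y')$ with $y,y'\in Y$ has no directed triangle; here I would record the easy but essential fact that the relation set, and hence the auxiliary graph, of the restricted system $(\mathcal{M};Y)$ is \emph{exactly} the restriction of $\mathcal{G}$ to those nodes, since the plurality vote (and the smallest-index tie-break of Definition \ref{def:3relation_set_of_pd}) on a pair of node-pairs $\{(x_p,x_q),(x_r,x_s)\}$ depends only on the four points involved and their unchanged distances.

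Next I would enumerate the witness family: all directed triangles $\Delta w_1 w_2 w_3$ of $\mathcal{G}$, where each $w_i$ is a node $(x,x')\in X\times X$. Since each node is a pair of points, there are $O(n^6)$ triples of graph nodes, and deciding whether a given triple forms one of the three directed-triangle patterns in Figure \ref{fig:directed_triangles} is $O(1)$; so this step costs $O(n^6)$. Collapsing a directed triangle to the set of at most six points of $X$ it mentions, the task becomes: delete a minimum $S\subseteq X$ intersecting the point set of every directed triangle. For the algorithm itself I would run the textbook greedy: initialize $S=\emptyset$, scan the list of directed triangles, and whenever the current triangle's (at most six) points are disjoint from $S$, add all of them to $S$, otherwise skip; this is $O(n^6)$.

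For the approximation guarantee, let $S^\ast$ be an optimal \WMIS solution. The $\le 6$-element point sets we actually added to $S$ are pairwise disjoint by construction, and $S^\ast$ must contain a point of each: each such set comes from a directed triangle of $\mathcal{G}$, which by Claim \ref{claim:3_directed_tri} and Lemma \ref{lem:validrelationset} witnesses weak inconsistency, so any weak consensus must omit one of its points. Hence $|S^\ast|\ge |S|/6$. Conversely, once the scan finishes every directed triangle of $\mathcal{G}$ has a point in $S$, so by the restriction fact above the induced subgraph on $X\setminus S$ has no directed triangle, hence no directed cycle (Claim \ref{claim:3_directed_tri}), hence the restricted relation set is valid (Lemma \ref{lem:validrelationset}) and $X\setminus S$ is a weak consensus; so $S$ is feasible and $|S|\ge|S^\ast|$. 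Combining, $|S^\ast|\le |S|\le 6|S^\ast|$, with total running time $O(kn^4+n^6)$.

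The only point needing care --- the ``main obstacle,'' and it is a minor one --- is precisely the interplay between vertex deletion and the plurality tie-breaking rule: one must verify that the induced subgraph of $\mathcal{G}$ on $X\setminus S$ really is the auxiliary graph of $(\mathcal{M};X\setminus S)$, i.e.\ that deleting points neither creates nor destroys directed triangles among the surviving node-pairs. This holds because the rule in Definition \ref{def:3relation_set_of_pd} never consults distances outside the four points of the pair-of-pairs under consideration; everything else is routine.
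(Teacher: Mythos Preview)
Your proposal is correct and follows essentially the same approach as the paper: build the auxiliary graph in $O(kn^4)$, enumerate all directed triangles in $O(n^6)$, and greedily add the (at most six) underlying points of each still-uncovered triangle to $S$, yielding a $6$-approximation via the disjoint-witnesses argument. Your explicit verification that the auxiliary graph of $(\mathcal{M};X\setminus S)$ coincides with the induced subgraph of $\mathcal{G}$ is a point the paper uses only implicitly, so your write-up is if anything slightly more careful.
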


\section{Conclusion}
        In this paper, we proposed to study the maximum ordinal consensus problem over a set of input metrics. We developed two concepts of ``consistency'' that only rely on ordinal information of pairwise distances. We proved several hardness results for both definitions with different input metrics. We also developed constant-factor approximation algorithms for the minimum inconsistent set problem under both definitions.
        
        There are still some open directions for future work. 
        For example, can we close the gap between the inapproximability and the approximation algorithm we developed? Can we improve the time complexity, especially for the \WMIS problem? Can we find better approximation algorithms for special cases such as Euclidean metrics or ultrametrics? We also note that the current approximation algorithms target the minimum (ordinally) inconsistent subset problems -- how about the dual maximum ordinal consensus problem?  
        
\section{Acknowledgment}
    This work is partially supported by National Science Foundation (NSF) under grant IIS-1815697, as well as National Institute of Health (NIH) under grant R01EB022899.

%        Although for most cases, the maximum consensus (minimum inconsistent set) problem is proven to be NP-complete, there is still a gap between the approximation factor of the current best approximation algorithm and the inapproximibilty result. A natural question is whether there is an approximation algorithm with better factor for any one of the studied cases. For example, it might be possible that there is a faster algorithm or an algorithm with a better approximation factor for strong inconsistent set problem with two ultrametrics.

% \input{Experiments.tex}
%\section{Conclusion} \label{sec:conclusion}
% \dingkang{todo}

\newpage
	\bibliography{Reference}

\newpage

\appendix 

\section{Missing Details} \label{app:missing_hardness_proofs}

\subsection{An Ultrametric and its Representing Tree} \label{app:ultra_representingTree}
Figure \ref{fig:ultra_tree} shows an example of an ultrametric and its representing tree. Values on internal nodes are their heights, and all leaf nodes are with height 0. The distance between any two leaf nodes is the height function value of their \LCA. For example, the distance from $A$ to $E$ is 3.
    
    \begin{figure}[ht]
        \centering 
        \includegraphics[width=0.8\textwidth]{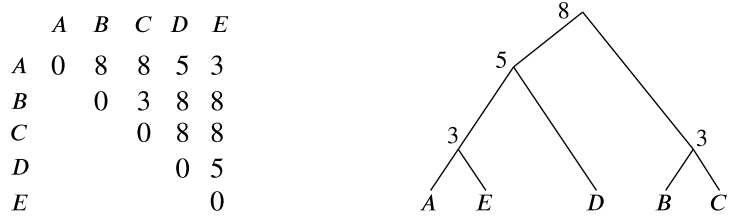}
        \caption{An ultrametric and its corresponding representing tree.}
        \label{fig:ultra_tree}
    \end{figure}

    \subsection{Proof of Theorem \ref{thm:MSI_2_line_mets}}
    \label{app:MSI_2_line_mets_NP}
    
    \begin{proof}
        We prove this theorem via a reduction from the \minVertexCover problem. 
        
        \noindent \textbf{Description of the reduction.}
        Suppose we are given an instance $G = (V, E)$ of \minVertexCover, where $V = \{v_1, ..., v_n\}$ and $E = \{e_1, ..., e_m\}$. A vertex cover $V' \subset V$ is such that every edge in $E$ has at least one endpoint in $V'$. The decision version of the \minVertexCover problem is that, given $G=(V, E)$ and an integer $K$, is there a vertex cover $V' \subset V$ such that $|V'|=K$?
        From the instance of \minVertexCover, we construct an instance of the \SMIS problem $(\mathcal{M} = \{\Metric_1, \Metric_2\}; X)$, where $\Metric_1, \Metric_2$ are two line metrics. Here $X = \{\vhat_1, ..., \vhat_n, r_{e_1, l}, r_{e_1, r}, ..., r_{e_m, l}, r_{e_m, r}\}$, and $|X| = n + 2m$. The $r_{e_i, l}$ and $r_{e_i, r}$ are two ``pivots'' for $i$th edge $e_i$.
        
        Since all these points are on real line $\R$, we will use $``[\cdot]_1"$ and $``[\cdot]_2"$ to represent coordinates. When the coordinate is the same for both metrics or there is no ambiguity, we will omit the subscript and use $``[\cdot]"$ for the coordinate. For example, $[\vhat_1]_1$ is the coordinate of point $\vhat_1$ in $\Metric_1$. 
        % \yusu{DK: wouldn't it make more sense if we use $[x]_1$ and $[x]_2$ to represent the coordinate in $\Metric_1$ and $\Metric_2$, respectively? If you agree, please propagate the change.} 
        Instead of constructing $\Metric_1, \Metric_2$ directly, we construct the coordinates of each point as follows (see Figure \ref{fig:strong_2real_line} for an example): 
        \begin{enumerate}
            \item For both $\Metric_1, \Metric_2$, $[\vhat_i] = 2^{i-1}$. 
            \item For both $\Metric_1, \Metric_2$, the coordinate of $[r_{e_j, l}] = 2^{n + 2(j-1)}$.
            \item For $r_{e_j, r}, 1 \leq j \leq m$, and $e_j = (v_a, v_b), b > a$, its coordinate $[r_{e_j, r}]_1$ in $\Metric_1$ is: $[r_{e_j, r}]_1 = [r_{e_j, l}]_1 + ([\vhat_b]_1 - [\vhat_a]_1) + \epsilon = 2^{n + 2(j-1)} + (2^{b-1} - 2^{a-1}) + \epsilon$. 
            Similarly, its coordinate $[r_{e_j, r}]_2$ in $\Metric_2$ is: $[r_{e_j, r}]_2 = [r_{e_j, l}]_2 + ([\vhat_b]_2 - [\vhat_a]_2) - \epsilon = 2^{n + 2(j-1)} + (2^{b-1} - 2^{a-1}) - \epsilon$.
            The intuition is the distance from $r_{e_j, l}$ to $r_{e_j, r}$ is close to the distance from $v_a$ to $v_b$, but orders are different in two metrics.
        
        \end{enumerate}
        
        \begin{figure}[ht]
            \centering 
            \includegraphics[width=0.75\textwidth]{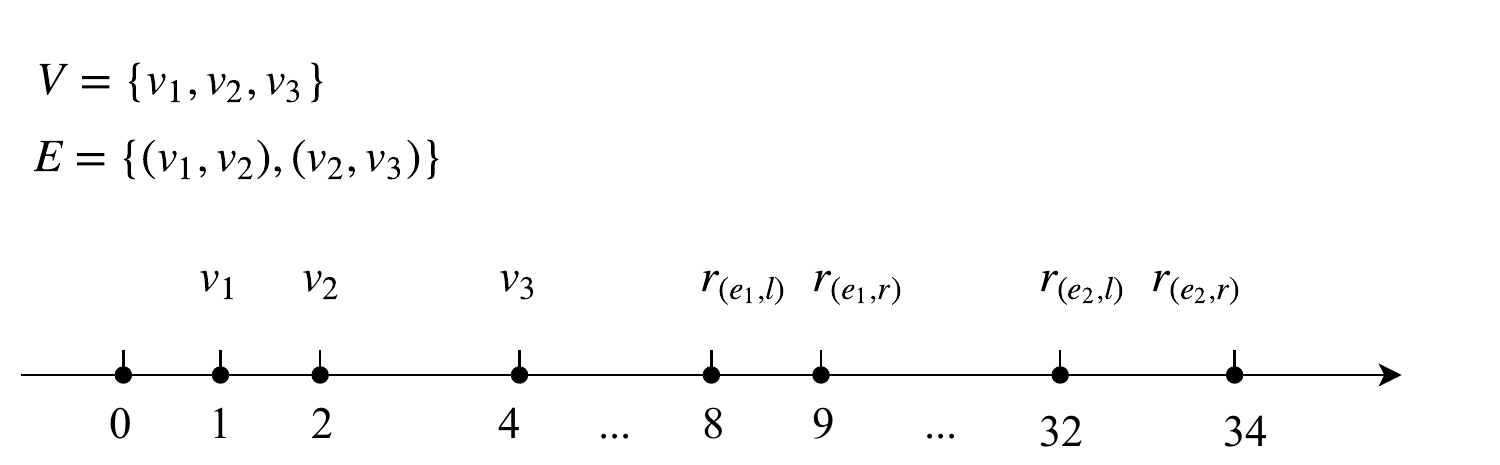}
            \caption{Constructed line metrics given an simple $G$.}
            \label{fig:strong_2real_line}
        \end{figure}
        
        \noindent \textbf{Check comparable pairwise distances.}
        The distance between any pair of nodes under metrics $\Metric_1$ and $\Metric_2$ can be easily calculated based on coordinates $[\cdot]_1$ and $[\cdot]_2$. 
        We now show that any conflict quartet of the above constructed instance must have the form of $\{\vhat_i, \vhat_j, r_{e, l}, r_{e, r}\}$, where $e = (v_i, v_j) \in E$. 
        We first consider the necessary conditions for two pairs $(a, b)$ and $(c, d)$ to cause a conflict. W.l.o.g., we can assume that $[b] > [a]$ and $[d] > [c]$ in both metrics.
        \begin{itemize}
            \item [(1)] To cause a conflict, the distances $[b] - [a]$ and $[d] - [c]$ should be \emph{close} so that the relation between $\{[b]_1 - [a]_1, [d]_1 - [c]_1\}$ and $\{[b]_2 - [a]_2, [d]_2 - [c]_2\}$ can be different. One can notice that the distance can be distorted by at most $2 \cdot \epsilon$ from one metric to the other, thus in a more formal way, \emph{close} means ``differ by at most $4 \cdot \epsilon"$.
            \item [(2)] Another necessary condition for $(a, b)$ and $(c, d)$ to cause a conflict is that $[b] - [a]$ is not identical to $[d] - [c]$ in any one of the two metrics. For example, assume that $[b]_1 - [a]_1 = [d]_1 - [c]_1$ (equivalently $[b]_1 + [c]_1 = [a]_1 + [d]_1)$ and $[b]_1 + [c]_1 = I + t \cdot \epsilon = [a]_1 + [d]_1$. Here $I$ and $t$ are some integers. When comes to coordinates in the second metric, one can show that $[b]_2 + [c]_2 = I - t \cdot \epsilon = [a]_2 + [d]_2$ as only the polarity of $\epsilon$ term is changed across two metrics.
        \end{itemize}
        Combining those two conditions, we define two pairs are \emph{comparable} if they are \emph{different} and \emph{differ by at most $4 \cdot \epsilon$}. A conflict is possible only when two pairs in the quartet are comparable.
        We then iterate over all pairs and check whether there is any other pair with a comparable pairwise distance.
        We will also use the \emph{leading term} (the largest power of two) of coordinates to find comparable pairs. The leading term of $[\vhat_i]s$ and $[r_{e, l}]$s are of different scales, and $[r_{e, r}]$ has the same leading term of $[r_{e, l}]$. Moreover, if $[b]$ has a larger leading term than $[a]$, then $[b] \ge 2[a]$.
        
        The necessary condition of two pairs $(a, b)$ and $(c, d)$ (where we assume $[b] > [a]$, $[d] > [c]$ and $[d] \ge [b]$) having comparable distances is either 1). nodes $d$ and $b$ have the leading terms of the same scale. or 2). nodes $c$ and $d$ have the leading term of the same scale. The reason is (assuming $[d]$ has a larger leading term than $[b]$ and $[c]$) that $[d] >= 2[c]$ and $2[b] <= [d]$ implies $[d] - [c] \ge [d]/2 \ge [b] \ge [b] - [a] + 1$ if none of the conditions are satisfied. In either case where the necessary condition is satisfied, there must be two nodes of the form $(r_{e, l}, r_{e, r})$ for some edge $e$. This observation largely reduced the possibilities of comparable pairs. All possible cases are considered below (we use ``$x, y$'' to denote unknown elements and ``$\approx$'' to denote comparable distances):
        
        \begin{enumerate}
            \item $(x, r_{e, l})$ and $(y, r_{e, r})$ for some edge $e$. We have $[r_{e, r}] - [y] \approx [r_{e, l}] - [x] \Rightarrow [y] - [x] \approx [r_{e, r}] - [r_{e, l}]$. Assume $e = (v_i, v_j)$, and we have $[y] - [x] \approx 2^{j-1} - 2^{i-1}$, this is true only when $y = \vhat_j$ and $x = \vhat_i$ by checking the leading term of $y$.
    
            \item $(x, y)$ and $(r_{e, l}, r_{e, r})$. By similar calculation, we should easily get $y = \vhat_j$ and $x = \vhat_i$ assuming $e = (v_i, v_j)$.
        \end{enumerate}
        
        And indeed, $\{\vhat_i, \vhat_j, r_{e, l}, r_{e, r}\}$ is a conflict quartet (where $e = (v_i, v_j)$) since $[\vhat_j]_1 - [\vhat_i]_1 < [r_{e, r}]_1 - [r_{e, l}]_1$ but $[\vhat_j]_2 - [\vhat_i]_2 > [r_{e, r}]_2 - [r_{e, l}]_2$. Now, we can conclude that the quartets in conflict set $\mathcal{C}$ have one-to-one correspondence to edges in $E$, i.e., $\mathcal{C} = \{(\vhat_i, \vhat_j, r_{e, l}, r_{e, r} | e = (v_i, v_j) \in E\}$. Now it is not hard to prove there is a size-preserving reduction from \minVertexCover to \SMIS (with two line metrics).
                
        \noindent \textbf{There is a vertex cover of size $K$ for $G$ $\Rightarrow$ there is an inconsistent set of size $K$ for $(\mathcal{M}=\{\Metric_1, \Metric_2\}; X)$:}
        Assume there is a vertex cover of $G$ with size $K$ denoted by $V' = \{v_1, \cdots, v_K\}$. Then the corresponding set $\Shat = \{\vhat_1, \cdots, \vhat_K\}$ is a solution for $(\mathcal{M}=\{\Metric_1, \Metric_2\}; X)$, since this set covers all edges and thus all possible conflict quartets. The remaining nodes will have no conflict quartet.
        
        \noindent \textbf{$``\Leftarrow"$ direction:}
        Conversely, assume there is an optimal inconsistent set with size $K$. It is clear that there is always an optimal solution to $(\mathcal{M}=\{\Metric_1, \Metric_2\}; X)$ which only includes $\vhat_i$s. The reason is that including $\vhat_i$ can potentially cover multiple conflict quartets, but $r_{e, l}$ (or $r_{e, r}$) will only cover at most one conflict. For any solution containing $r_{e, l}$ (or $r_{e, r}$), including one of the endpoints of $e$ will also cover the conflict (covered by $r_{e, l}$). And it can potentially cover other conflicts and thus reduce the size of the solution. Assume one optimal solution is $S^*=\{\vhat_1, \cdots, \vhat_K\}$ which covers all conflict quartets, then the corresponding node set $\{v_1, \cdots, v_K\}$ is a vertex cover covering all edges due to the one-to-one correspondence between edges in $E$ and quartets in $\mathcal{C}$.
    \end{proof}

    \subsection{Proof of Theorem \ref{thm:MSI_2_ultra_NP}}
    \label{app:MSI_2_ultra_NP}
    \begin{proof}
        We prove this theorem via a reduction from the \maxTwoSAT problem.
        
        \noindent \textbf{Description of reduction.}
        Suppose we are given an instance $(C, X)$ of the \maxTwoSAT problem, where $C = \{c_1, ..., c_{m_1}, c_{m_1+1}, c_{m_1+m_2}\}$ is a set of clauses. There are $m_1$ clauses with one literal and $m_2$ clauses with two literals. And $X = \{x_1, ..., x_n\}$ is a set of variables. The decision version of the \maxTwoSAT problem is that, given $(C, X)$ and an integer $K$, is there an assignment to variables such that $K$ clauses are satisfied?
        
        From the instance of \maxTwoSAT problem, we construct the following instance of \SMIS $(\mathcal{M} = \{\Metric_1, \Metric_2\}; \widehat{X})$ where $\Metric_1$ and $\Metric_2$ are ultrametrics. In particular, we will set up each metric via its representing tree as introduced in Definition \ref{def:ultrametric}. As we will see below, both representing trees of $\Metric_1, \Metric_2$ will have the same structure; while the only difference is the height functions over internal nodes.
        
        First, we will describe the tree structure $T$ (which will be common for both $\Metric_1$ and $\Metric_2$). 
        The root of $T$ has $m_1 + m_2 + 1$ children, $\{B, \chat_1, ..., \chat_{m_1 + m_2}\}$, where $\chat_i$s correspond to $i$th clauses in $C$. The node $B$ has $|\Child(B)| = 2m_1 + 4m_2 + 1$ children denoted by $b_1, ..., b_{2m_1 + 4m_2 + 1}$ (here for simplicity, we use $\Child(B)$ to denote the children set of $B$). The size of $\Child(B)$ is designed to be larger than the total number of literals (allowing duplicates) appeared in the clause set $C$.
        We show later that one can assume a maximum consensus always contain all nodes from $\Child(B)$.
        Each $\chat_i$ has at most two children $l_{i1}, l_{i2} \in \{x_1, ..., x_n, \bar{x_1}, ..., \bar{x_n}\}$ corresponding to the literals in clause $c_i$, and each $l$ has two leaf nodes $d$s as its children. For example, if clause $c_i$ has two literals, then $\chat_i$ will have two children $l_{i1}, l_{i2}$; And $l_{i1}$ has two children (leaves) $d_{i1}, d_{i2}$, $l_{i2}$ has two children (leaves) $d_{i3}, d_{i4}$. If clause $c_i$ has only one literal, then $\chat_i$ will have one child $l_{i1}$ which has two children (leaves) $d_{i1}$, $d_{i2}$. See an illustration of the tree structure $T$ in Figure \ref{fig:MSI_2_ultra_structure}. 
        
        Recall that the leaf set of this representing tree $T$ corresponds to the note set of the metric. Hence, in the constructed \SMIS instance, we have that the note set is $\widehat{X} = \{b_1, ..., b_{2m_1 + 4m_2 + 1}, d_{11}, d_{12}, \cdots, d_{m_11}, d_{m_12}, \cdots, d_{(m_1+m_2)1}, \\ d_{(m_1+m_2)2}, d_{(m_1+m_2)3}, d_{(m_1+m_2)4}\}$ with size $2m_1 + 4m_2 + 1 + 2m_1 + 4m_2 = 8m_2 + 4m_1 + 1$.
        
        \begin{figure}[ht]
            \centering 
            \includegraphics[width=1\textwidth]{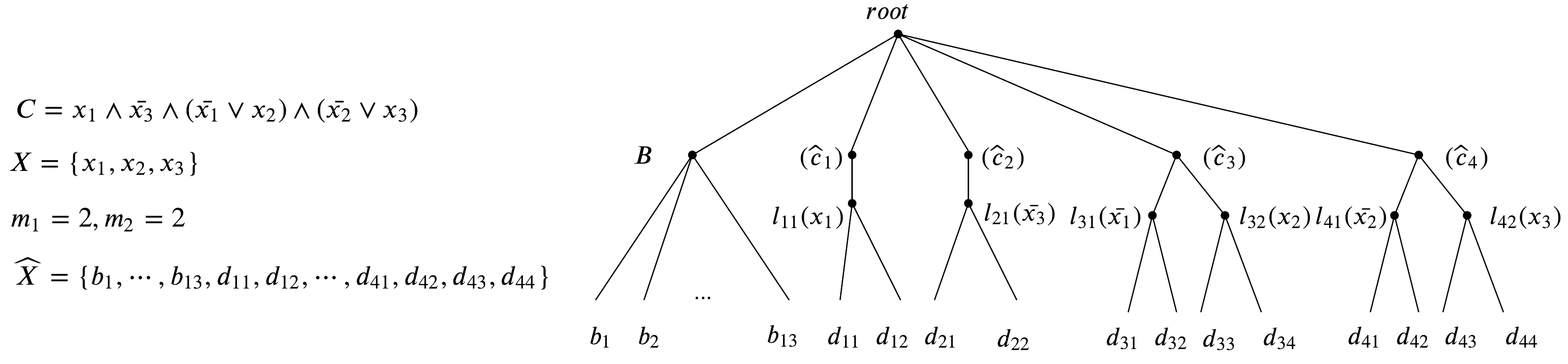}
            \caption{Given an instance of \maxTwoSAT with $C = x_1 \wedge \bar{x_3} \wedge (\bar{x_1} \vee x_2) \wedge (\bar{x_2} \vee x_3)$, the representing tree structure of the constructed ultrametrics is shown on the right. $c_i$ corresponds to the $i$th clause in $C$. $l_{i1}$ (and $l_{i2}$) are its literals. The corresponding literals are shown in the bracket.}
            \label{fig:MSI_2_ultra_structure}
        \end{figure}
        
        Now we equip this tree structure $T$ with two height functions $h_1$ and $h_2$ mapping internal nodes of $T$ to real values. It generates metrics $\Metric_1$ and $\Metric_2$, respectively. 
        In particular, recall that given any two leaves $z$ and $z'$, their distance $\Metric_i(z, z') = h_i(\LCA(z, z'))$. %w.r.t. metric $\Metric_i$, $i = 1, 2$, equals the height $h_1(\LCA(z,z'))$ of the lowest common ancester $\LCA(z,z')$ of $z, z'$ in the representing tree $T$. 
        We set up $h_1, h_2$ so that $h_1(x_1) < h_1(\bar{x_1}) < h_1(x_2) < h_1(\bar{x_2}) < ... < h_1(x_n) < h_1(\bar{x_n}) < h_1(B) < h_1(\chat_1) < ... < h_1(\chat_{m_1+m_2})$ and  $h_2(\bar{x_1}) < h_2(x_1) < h_2(\bar{x_2}) < h_2(x_2) < ... < h_2(\bar{x_n}) < h_2(x_n) < h_2(\chat_1) < ... < h_2(\chat_{m_1+m_2}) < h_2(B)$. 
        Note that the precise value of each height does not matter as only the ordering of pairwise distances matters. We use $(T_1, h_1)$ and $(T_2, h_2)$ ($T_1$ and $T_2$ have the same tree structure as $T$) to denote the two representing trees for $\Metric_1$ and $\Metric_2$, respectively. 
        
        To see how the maximum consensus problem for $(\{\Metric_1, \Metric_2\}, \widehat{X})$ relates to the maximum satisfiability of $(C, X)$, note the following: (1) Intuitively, the height functions $h_1$ and $h_2$ guarantee that, for a consensus $\widehat{X}' \subset \widehat{X}$, we cannot include both children of $x_i$ and $\bar{x}_i$, the heights of $x_i$ and $\bar{x}_i$ have opposite orders and thus inconsistent w.r.t. $\Metric_1$ and $\Metric_2$. For instance, in the example of Figure \ref{fig:MSI_2_ultra_structure}, one cannot include $d_{11}, d_{12}$ (children of $l_{11} (x_1)$) and $d_{31}, d_{32}$ (children of $l_{31} (\bar{x_1})$) in $\widehat{X}'$.
        (2) The heights of the internal node $B$ guarantee that for each $\chat_i$, one can only choose either one of its leaf nodes or two of its leaf nodes which are the children of the same literal (See Figure \ref{fig:MSI_2ultra_second_direction}). For the instance shown in Figure \ref{fig:MSI_2_ultra_structure}, we can include both $d_{41}$ and $d_{42}$ in $\widehat{X}'$ since they are the children of the same literal $l_{41}$. However, one cannot include both $d_{41}$ and $d_{43}$ in $\widehat{X}'$ because $h_1(B) < h_1(\chat_4)$ and $h_2(\chat_4) < h_2(B)$. 
        
        Basically, for the construction, the consensus can include two leaf nodes when the corresponding clause is satisfied; Otherwise it can only include one leaf node. 
        % Moreover, for any satisfied clause, there can be two literals of the clause having true value. However, our construction guarantees the consensus $\widehat{X}'$ can at most include two leaf nodes of one true literal.
        Below we show that there is an assignment of $X$ satisfying $K$ clauses in $C$ \emph{if and only if} there is a consensus of $\widehat{X}' \subset \widehat{X}$ of cardinality $|\widehat{X}'| = |\Child(B)|+m_1+m_2+K$. 
        
        \noindent \textbf{$``\Rightarrow''$ direction:}
        %\yusu{DK: what does "a solution to $\phi$" mean in the previous sentence? here, you are actually talking the size of consensus, not the size of the inconsistent set which is what MIS specifies. Of course I can understand it, but this will confuse readers. Please always be precise and consistent with your notations. See the statement I gave before this paragraph. Follow the same notations consistently. Also, I suggest changing this to only ``{\bf $\Rightarrow$ direction.}'' instead of this long title, because you are repeating exactly the same thing below.} 
        
        Assume there is an assignment to variables of $\{C, X\}$ such that $K$ clauses are satisfied. Then we have a consensus of size $|\Child(B)| + m_1 + m_2 + K$ as follows. 
        We first include all $b_i$s. For each satisfied clause, we randomly selected one of its literal with true value and keep both of its children. For the clauses not satisfied, we keep any one of its four leaves. The selected node set is strongly consistent, because (1): there are no leaf nodes from different literals of a clause; 2). we do not keep both leaf nodes from two literals (say, $x_i$ and $\bar{x_i}$) of the same variable as we only keep both leaves of true literals. The size of the set is $|\Child(B)| + m_1 + m_2 + K$. 
        
        More specifically, w.l.o.g., assume nodes in $\widehat{X}'$ are $\{b_1, \cdots, b_{2m_1 + 4m_2 +1}, \cdots, \\ d_{11}, d_{12}, \cdots, \\d_{K1}, d_{K2}, \cdots, d_{(m_1+m_2)1}\}$ (See Figure \ref{fig:MSI_2ultra_first_direction}). We can also assume that the corresponding literals for $l_{11}, \cdots, l_{K1}$ are $x_1, \cdots, x_K$. Then all pairwise distances in $\Metric_1$ form the set $\{h_1(x_1), h_1(x_2), h_1(x_K), h_1(B), h_1(root)\}$ (and $\{h_2(x_1), \cdots, \\ h_2(x_K), h_2(B), h_2(root)\}$ for $\Metric_2$). One can check the order of those heights are the same in two metrics, and thus $\widehat{X}'$ is a consensus. 
        
        \begin{figure}[ht]
            \centering 
            \includegraphics[width=0.6\textwidth]{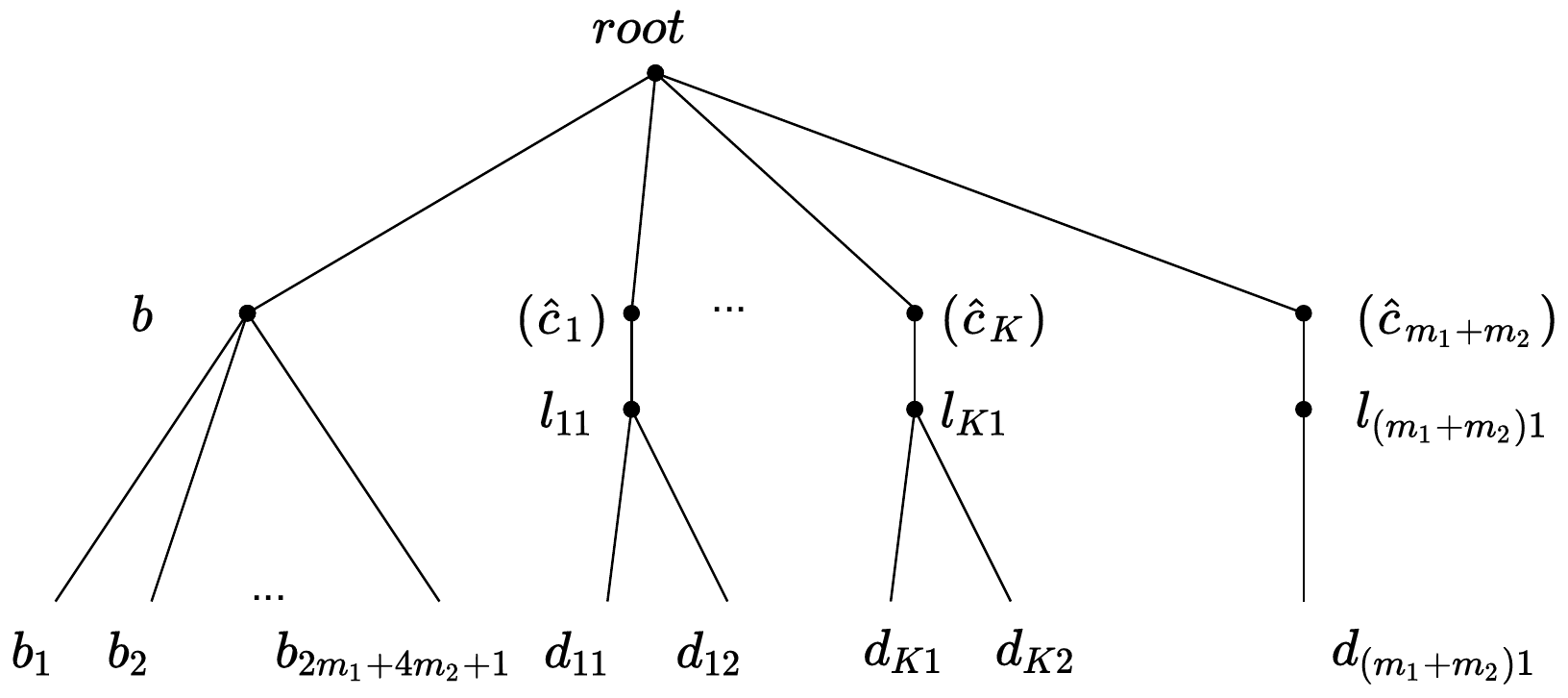}
            \caption{An example of leaf nodes included in $\widehat{X}'$. For each satisfied clause $\{c_1, \cdots, c_K\}$, we can include two out of four leaves (these two leaves have to be the children of the same literal). For unsatisfied clauses, we can keep one leaf. This leads to a consensus with $|\Child(B)| + m_1 + m_2 + K$ nodes. The internal nodes with only one child can be ignored (e.g., $l_{(m_1+m_2)1}$).}
            \label{fig:MSI_2ultra_first_direction}
        \end{figure}
        
        %\noindent \textbf{There is an optimal solution to $\phi$ with $|B| + m + k$ nodes remaining $\Rightarrow$ there is assignment that $k$ clauses are satisfied.}
        \noindent \textbf{$``\Leftarrow"$ direction:}
        Assume there is maximum consensus of $(\mathcal{M}=\{\Metric_1, \Metric_2\}; \widehat{X})$ with $|\Child(B)| + m_1 + m_2 + K$ nodes. Then we want to show that there is an assignment satisfying at least $K$ clauses based on the following observations.
        
        \begin{enumerate}
            \item There is always an optimal maximum consensus that keeps all $b_i$s. Simply including all $b_i$s will produce a consensus with $|\Child(B)| = 2m_1 + 4m_2 + 1$ nodes. If one optimal solution does not include all $b_i$s, it must have removed at least $2m_1 + 4m_2$ $b_i$s to avoid any conflict. In this case, there are only at most $2m_1 + 4m_2 + 1$ nodes left.
            \item For the restricted tree on the nodes from the maximum consensus (including all $b_i$s), there are only two possible options for a subtree rooted at $\chat_i$s (See Figure \ref{fig:MSI_2ultra_second_direction}). The reason is that we cannot keep leaf nodes of both literals (e.g., we cannot include both $d_{31}$ and $d_{33}$ for the case in Figure \ref{fig:MSI_2_ultra_structure}), which will cause a conflict since $h_1(B) < h_1(\chat_i)$ but $h_2(\chat_i) < h_2(B)$.
            \item For the literal with both children included, there is no conflict. Therefore, e.g., if $x_i$ has both children included, then $\bar{x_i}$ cannot. This is because the heights of $x_i$ and $\bar{x_i}$ are not consistent, their four leaf nodes form a conflict quartet.
        \end{enumerate}
        
        \begin{figure}[ht]
            \centering 
            \includegraphics[width=0.5\textwidth]{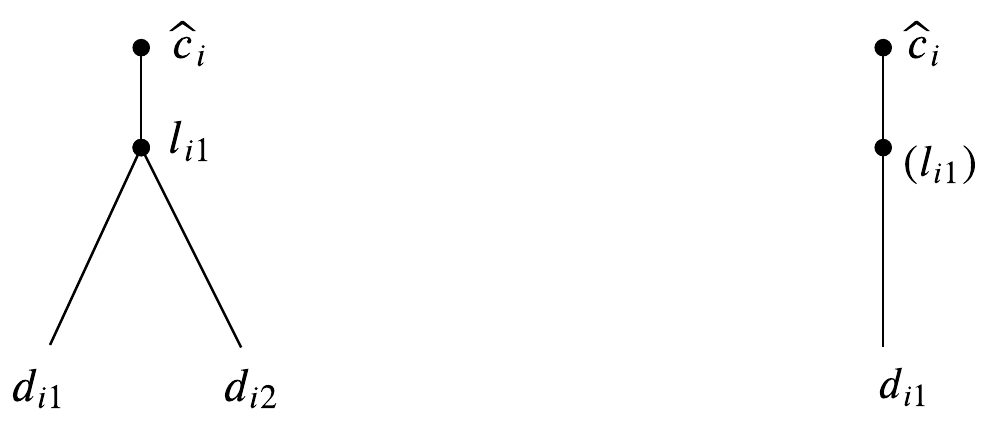}
            \caption{Two cases of subtrees rooted at $\chat_i$. The indexes of literals or leaves are not important, so they are omitted.}
            \label{fig:MSI_2ultra_second_direction}
        \end{figure}
        
        Based on these observations, there is an optimal solution having exactly $K$ subtrees rooted at $\chat_i$s with two leaves selected.  The two leaves must be the children of one literal. Also, there will not be any contradiction from those $K$ literals with two children selected (i.e., $x_i$, $\bar{x_i}$ cannot both have two children selected) 
        We can assign true value to these literals, and the $K$ corresponding clauses are satisfied.
    \end{proof}

\subsection{Proof of Theorem \ref{thm:strong_NP_2metrics}} \label{app:prove_thm_strongNP_2metric}

% \begin{proposition} \label{prop:strong_NP_2metrics}
%         Given the input with two arbitrary metrics, $\mathcal{M} = \{\Metric_1, \Metric_2\}$, It is NP-complete to find a \SMIS.  
%     \end{proposition}
    \begin{proof}
        We prove this theorem via a reduction from \minVertexCover. 
        
        \noindent \textbf{Description of reduction.}
        Given an instance of \minVertexCover, $G = (V, E)$, where $V = \{v_1, v_2, ..., v_n \}$, $E = \{e_1, e_2, ..., e_m\}$. We construct an instance $(\mathcal{M}; X)$ of \SMIS, where $ \mathcal{M} = \{\Metric_1, \Metric_2\}$ ($\Metric_1, \Metric_2$ are two arbitrary metrics) and $X = \{r_1, ..., r_n, \vhat_1, ..., \vhat_n\}$. Here $\vhat_1, ..., \vhat_n$ correspond to nodes $\{v_1, v_2, ..., v_n \}$ in $V$.
        
        For these two metrics, $\Metric_1(r_i, \vhat_j) = \Metric_2(r_i, \vhat_j) = 1$, and $\Metric_1(r_i, r_j) = \Metric_2(r_i, r_j) = 1$, for $\forall i, j$. Those pairwise distances are used as standards which will be compared with pairwise distances of $\Metric_1(\vhat_i, \vhat_j)$ and $\Metric_2(\vhat_i, \vhat_j)$, for those pairwise distances, 
        \[
        \Metric_1(\vhat_i, \vhat_j)=\begin{cases}
        1 + \epsilon \text{ if }(v_i, v_j) \in E,\\
        0, \text{ if } i=j, \\
        1 \text{ otherwise.}
        \end{cases}
        \Metric_2(\vhat_i, \vhat_j)=\begin{cases}
        1 - \epsilon \text{ if }(v_i, v_j) \in E,\\
        0, \text{ if } i=j, \\
        1 \text{ otherwise.}
        \end{cases}
        \]
        
        Clearly, for a fixed edge $(v_i, v_j) \in E$, $(\vhat_i, \vhat_j, r_k, r_l), \forall 1 \le k, l \le n$ is a conflict quartet, since $\Metric_1(\vhat_i, \vhat_j) > \Metric_1(r_k, r_l) = 1$ but $\Metric_2(\vhat_i, \vhat_j) < \Metric_2(r_k, r_l) = 1$. We remark that there is an optimal inconsistent set $S^*$ of $(\mathcal{M};X)$ which does not have any point from $\{r_1, ..., r_n\}$. It is because $|S^*| < n$ by removing all $\vhat_i$'s except one; And to cover any conflict quartet, one has to remove at least $n-1$ $r_i$s. 
        
        Now, we want to show that there is an optimal vertex cover of size $K$ if and only if there is an inconsistent subset of size $K$. 
        
        \noindent \textbf{$``\Rightarrow"$ direction:}
        If there is a vertex cover of size $K$ for $G$, denoted by $V' = \{v_1, ..., v_K\}$. Then the corresponding set $\Shat = \{\vhat_1, ..., \vhat_K\}$ is a inconsistent set, because for the remaining nodes $\{\vhat_i \,|\, \vhat_i \in X \backslash \Shat\}$, their pairwise distances are all 1 in both metrics, therefore they form a consensus together with $r_i$s.
        
        \noindent \textbf{$``\Leftarrow"$ direction:}
        Conversely, if there is a minimum inconsistent set with size $K$ (notice that $K$ is always smaller than $n$ and only contains nodes from $\{\vhat_1, ..., \vhat_n\}$), denoted as $S^* = \{\vhat_1, ..., \vhat_K\}$, then there is a vertex cover $\{v_1, ..., v_k\}$. That is because, by removing $\vhat_1, ..., \vhat_K$, the remaining elements form a consensus set. Thus for any edge $(v_i, v_j) \in E$, there is at least one of $\vhat_i, \vhat_j$ in $S^*$.
        
    \end{proof}

    \subsection{Missing Figure in Theorem \ref{thm:MSI_4_approx_alg}}
    \label{app:missing_figure_MSI_4approx_alg}
    
    The following figure shows an illustration for the 4-approximation algorithm for $\SMIS$ problem.
    \begin{figure}[h]
            \centering 
            \includegraphics[width=0.75\textwidth]{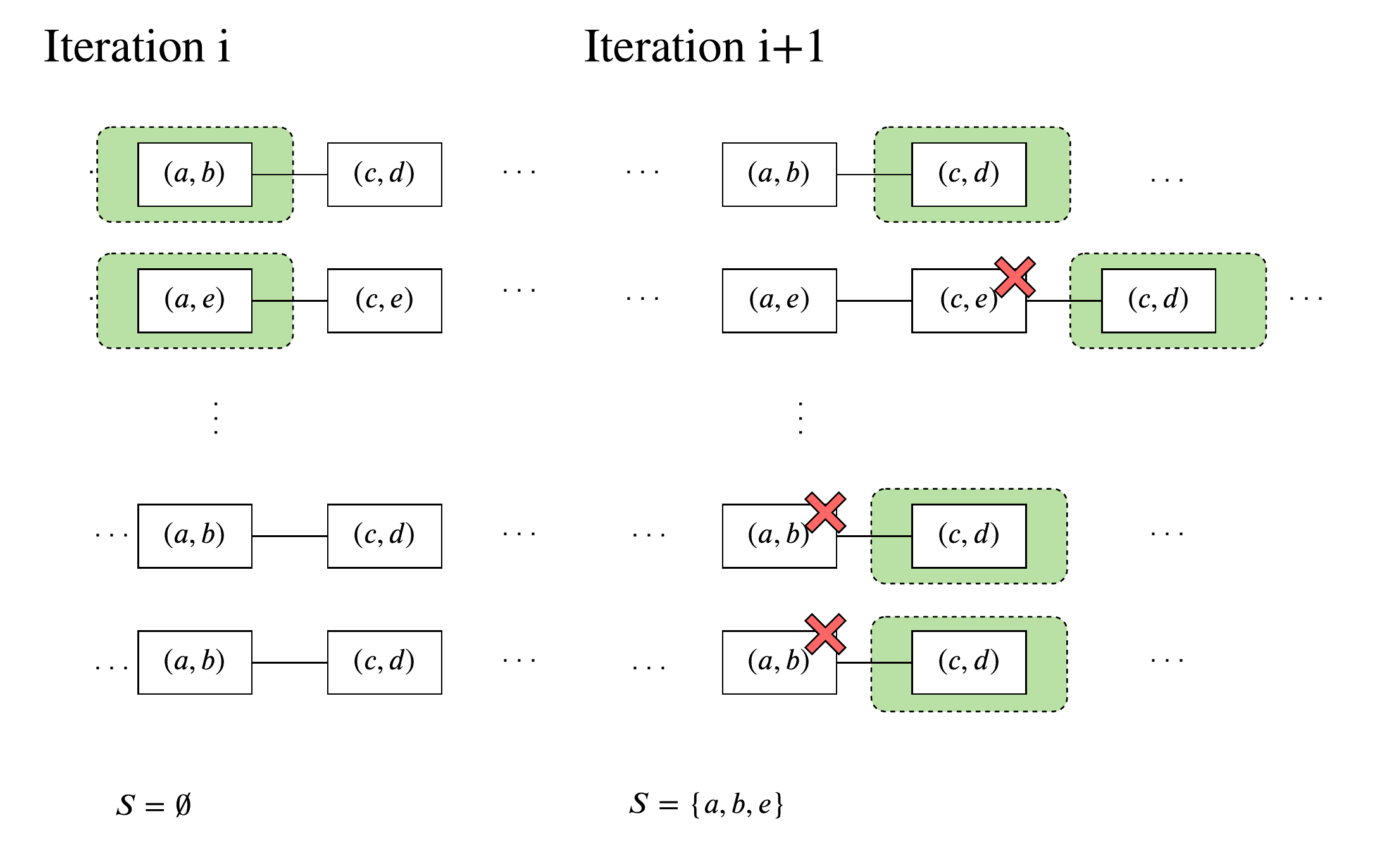}
            \caption{An illustration for our algorithm. In iteration $i$, the $k$ pointers are pointing to the head of those $k$ lists. When testing the second list, there is already a conflict. $\{a, b, e\}$ are included in the inconsistent set $S$. In the beginning of the next iteration $i+1$, it first skips all the edges with nodes in $S$. Edges in green are considered in each iteration.}
            \label{fig:SMIS_4_approx_alg}
        \end{figure}

    \subsection{Proof of Claim \ref{claim:3_directed_tri}}
    \label{app:prove_claim_3_directed_tri}
    \begin{proof}
        Given a mixed graph $G$, assume $U = \{u_0, u_1, ..., u_j\}$ is the directed cycle with smallest length and $j > 2$. For any edge $(u_i, u_{i+1}), 0 \le i \le k$ (addition modulo $k+1$), it is either undirected or from $u_i$ to $u_{i+1}$. W.l.o.g., let $(u_0, u_1)$ be a directed edge. Consider the edge connecting $u_1$ and $u_j$, there are three cases. If the edge connecting $u_1$ and $u_j$ is undirected or from $u_1$ to $u_j$, $\{u_0, u_1, u_j\}$ form a directed triangle. Otherwise (i.e., the edge is from $u_j$ to $u_1$), then $\{u_1, ..., u_j\}$ forms a smaller directed cycle. Both cases contradict with the assumption of smallest length.
    \end{proof}

    \subsection{Missing proof of Theorem \ref{thm:MWI_3_line_metrics}}
    \label{app:prove_MWI_3_line_mets}
    \begin{proof}
        We prove this theorem via a reduction from the \minVertexCover problem. Here we propose a similar construction of Theorem \ref{thm:MSI_2_line_mets}.
        
        \noindent \textbf{Description of the reduction.}
        Suppose we are given a \minVertexCover instance $G = (V, E)$, $V = \{v_1, ..., v_n\}$, $E = \{e_1, ..., e_m\}$. We construct an instance ($\mathcal{M} = \{\Metric_1, \Metric_2, \Metric_3\}; X\}$ of the \WMIS problem, where $\Metric_1, \Metric_2, \Metric_3$ are line metrics and $X = \{\vhat_1, ..., \vhat_n, r_{e_1, l_1}, r_{e_1, l_2}, r_{e_1, r_1}, r_{e_1, r_2}, ..., r_{e_m, l_1}, r_{e_m, l_2}, r_{e_m, r_1}, \\ r_{e_m, r_2}\}$. Here, we also use $``[\cdot]_1"$, $``[\cdot]_2"$, and $``[\cdot]_3"$ to represent the coordinates in $\Metric_1$, $\Metric_2$ and $\Metric_3$, respectively. When the coordinate is the same across all three metrics or there is no ambiguity, we will omit the subscript and use $``[\cdot]"$ instead. 
        
        The coordinates are constructed as follow, and the intuition is that only triples like $\{(\vhat_i, \vhat_j), (r_{e, l_1}, r_{e, r_1}), (r_{e, l_2}, r_{e, r_2})\}$ (where $e = (v_i, v_j)$) can potentially cause a directed triangle (See Figure \ref{fig:weak_3_line} for an example):
        
        \begin{enumerate}
            \item For all of these line metrics, the coordinates $[\vhat_i] = 2^{i-1}$.
            \item For all of these metrics, $[r_{e_j, l_1}] = 2^{n + 4j - 4}$, and $[r_{e_j, l_2}] = 2^{n + 4j - 2}$.
            \item For each edge $e = (v_i, v_j)$, in $\Metric_1$, $[r_{e, r_1}]_1 = [r_{e, l_1}]_1 + [\vhat_i]_1 - [\vhat_j]_1 - 2 \epsilon$, $[r_{e, r_2}]_1 = [r_{e, l_2}]_1 + [\vhat_i]_1 - [\vhat_j]_1 - \epsilon$; in $\Metric_2$, $[r_{e, r_1}]_2 = [r_{e, l_1}]_2 + [\vhat_i]_2 - [\vhat_j]_2 + \epsilon$, $[r_{e, r_2}]_2 = [r_{e, l_2}]_2 + [\vhat_i]_2 - [\vhat_j]_2 + 2 \epsilon$; in $\Metric_3$, $[r_{e, r_1}]_3 = [r_{e, l_1}]_3 + [\vhat_i]_3 - [\vhat_j]_3 + \epsilon$, $[r_{e, r_2}]_3 = [r_{e, l_2}]_3 + [\vhat_i]_3 - [\vhat_j]_3 - \epsilon$.
        \end{enumerate}
        
        \begin{figure}[ht]
            \centering 
            \includegraphics[width=0.75\textwidth]{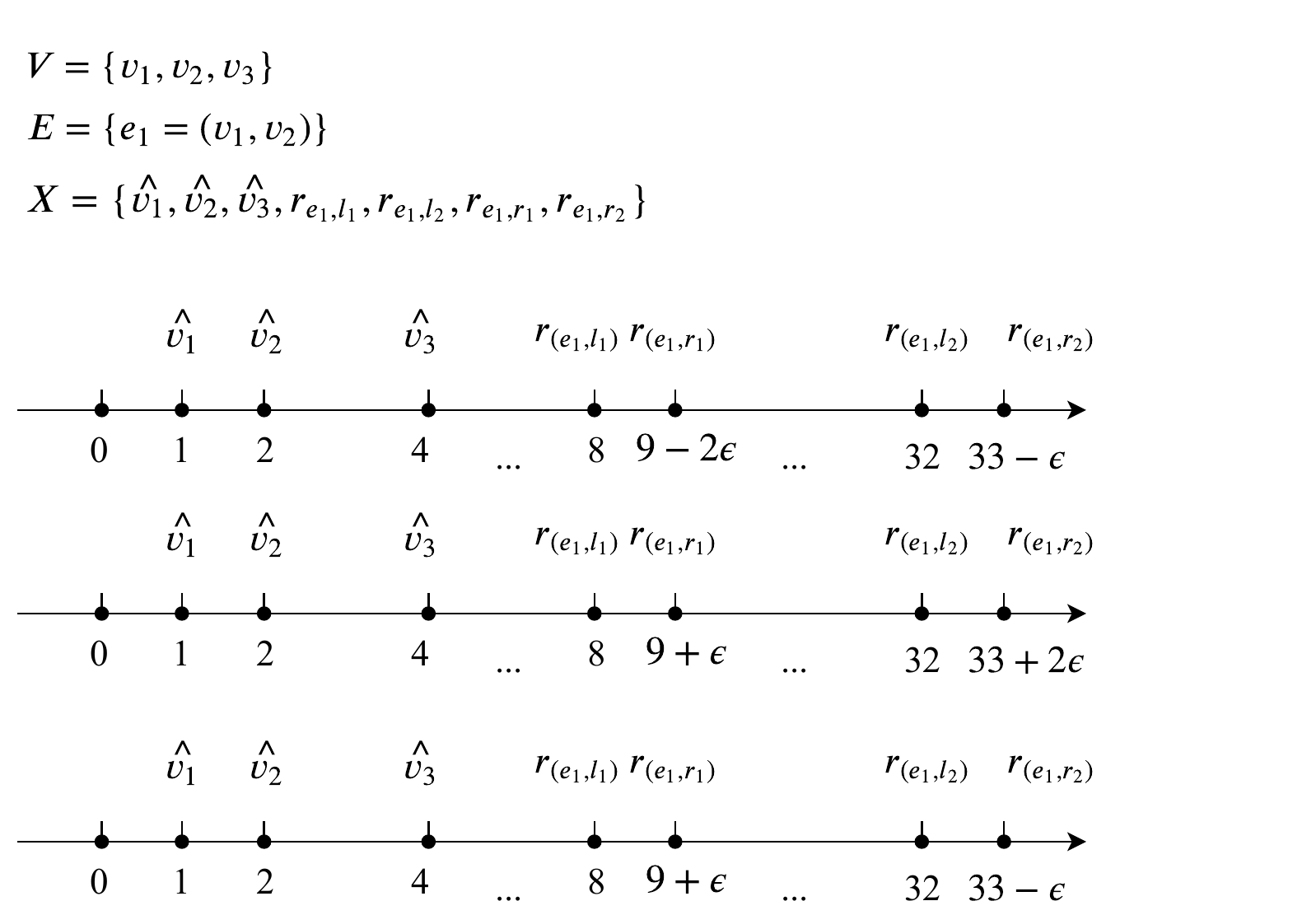}
            \caption{An example of constructed three line metrics given graph $G = (V, E)$. }
            \label{fig:weak_3_line}
        \end{figure}
        
        \noindent \textbf{Comparable pairwise distances.}
        Similarly as Theorem \ref{thm:MSI_2_line_mets}, we will use the concept of \emph{comparable distances}. Here two distances are comparable if they are \emph{different} and differ by at most $6 \epsilon$. The constant factor for $\epsilon$ is changed accordingly since the difference between the coordinates of the same point in two metrics is at most $3 \epsilon$ (instead of $2\epsilon$ in Theorem \ref{thm:MSI_2_line_mets}). The leading terms of $\vhat_i$s, $r_{e, l_1}$ (or $r_{e, r_1}$) and $r_{e, l_2}$ (or $r_{e, r_2}$) are in different scales. For any edge $e_k = (v_i, v_j)$ (here we overuse $k$ for simplicity), the comparable pairs are (recall that $(a, b), (c, d)$ are comparable only when either $[b], [d]$ or $[c], [d]$ have the same leading term):
        \begin{enumerate}
            \item $(\vhat_i, \vhat_j)$, $(r_{e_k, l_1}, r_{e_k, r_1})$ and $(r_{e_k, l_2}, r_{e_k, r_2})$. The distances are around $2^{j-1} - 2^{i-1}$.
            \item $(\vhat_i, r_{e_k, l_1})$ and $(\vhat_j, r_{e_k, r_1})$. The distances are around $2^{n+4k-4} - 2^{i-1}$.
            \item $(\vhat_i, r_{e_k, l_2})$ and $(\vhat_j, r_{e_k, r_2})$. The distances are around $2^{n+4k-2} - 2^{i-1}$.
            \item $(r_{e_k, l_1}, r_{e_k, l_2})$ and $(r_{e_k, r_1}, r_{e_k, r_2})$. The distances are around $2^{n+4k-2} - 2^{n+4k-4}$.
        \end{enumerate}
        
        \noindent \textbf{A triangle can potentially be directed only when all three underlying edges are mutually comparable.}
        
        For any triangle, there are three cases in total depending on the scale of distances of three underlying edges $(x_1, x_2), (y_1, y_2), (z_1, z_2) \in X \times X$.
        \begin{enumerate}
            \item If the distances of these three pairs all are not comparable, the $\epsilon$-distortion will not affect the orders of the distances at all. Therefore, the three metrics should agree on the ordering of these pairwise distances. And clearly, the triangle is not directed. 
            \item If two of them have comparable distance, e.g., $(x_1, x_2)$ and $(y_1, y_2)$. Then w.l.o.g., assume $(z_1, z_2)$ has much larger distance compared with those two. Then the edge in the auxiliary graph from $(z_1, z_2)$ to the other two are all outgoing, thus the triangle is not directed. Otherwise, if $(z_1, z_2)$ has a smaller distance, the edges are all incoming and the triangle is also not directed.
            \item Only when all three edges have comparable distances, there will be possibility that the triangle is directed. If we look at the triangle formed by $\{(\vhat_i, \vhat_j), (r_{e, l_1}, r_{e, r_1}), (r_{e, l_2}, r_{e, r_2}) | e \in E, e = (v_i, v_j)\}$. We have relations $(\vhat_i, \vhat_j) < (r_{e, l_1}, r_{e, r_1})$, $(r_{e, l_1}, r_{e, r_1}) < (r_{e, l_2}, r_{e, r_2})$ and $(r_{e, l_2}, r_{e, r_2}) < (\vhat_i, \vhat_j)$ by plurality vote. And they indeed form a directed triangle from $(\vhat_i, \vhat_j)$ to $(r_{e, l_2}, r_{e, r_2})$ to $(r_{e, l_1}, r_{e, r_1})$, and back to $(\vhat_i, \vhat_j)$.
        \end{enumerate}
        
        Therefore, the directed triangles are $\{(\vhat_i, \vhat_j), (r_{e, l_1}, r_{e, r_1}), (r_{e, l_2}, r_{e, r_2}) \, | \, e = (v_i, v_j) \in E\}$.
        Including $\vhat_i$s in the inconsistent set is better than including $r_{e, l}$ or $r_{e, r}$ since it can possibly cover multiple directed triangles. Thus, there is always an optimal solution consisting of only $\vhat_i$s. One can always replace $r_{e, l}$ and $r_{e, r}$s by $\vhat_i$s (where $\vhat_i \in e$) and still cover the directed triangles. Now we are ready to show that there is a vertex cover of size $K$ for $G$ if and only if there is a weak inconsistent set of size $K$ for $(\mathcal{M}=\{\Metric_1, \Metric_2, \Metric_3\}; X)$.
        
        \noindent \textbf{$``\Rightarrow"$ direction:}
        Assume $V' = \{v_1, \cdots, v_K\}$ is a vertex cover for $G$. Then clearly $\Shat = \{\vhat_1, \cdots, \vhat_K\}$ is a weak inconsistent set. This is because $\Shat$ covers all potential directed triangles.
        
        \noindent \textbf{``$\Leftarrow$" direction:}
        We showed that there is always an optimal solution only consisting of $\vhat_i$s. Assume the optimal solution is $\Shat^* = \{\vhat_1, \cdots, \vhat_K\}$. To cover all directed triangles, for any edge $e = (v_i, v_j) \in E$, there must be at least one corresponding node $\vhat_i$ or $\vhat_j$ contained in the inconsistent set. Therefore, $V' = \{v_1, \cdots, v_K\}$ is a vertex cover of $G$.
        
    \end{proof}

\subsection{Proof of Lemma \ref{lem:nomix}}
\label{appendix:lem:nomix}

\paragraph*{Proving statement (i) of Lemma \ref{lem:nomix}.}    
To prove (i), note that if $w = (p, q)$ splits, then $LCA(p, q)$ is the root $r$ in each of the three representing trees, and thus $U_?(p, q) = 10$, for $?\in \{X, Y, Z\}$. Now take any two other graph nodes $w_1 = (p_1, q_1), w_2 = (p_2, q_2)\in P \times P$. A simple case analysis shows that no matter these two other pairs split or not, the three graph nodes $w, w_1, w_2$ cannot form a directed triangle. For example, if neither $w_1, w_2$, then there will be two directed edges $(w, w_1)$ and $(w, w_2)$ in the auxiliary graph $\mathcal{G}$, and thus no matter what direction the edge between $w_1$ and $w_2$ is, the triangle $w, w_1, w_2$ cannot be directed. This proves (i).

\paragraph*{Proving statement (ii) of Lemma \ref{lem:nomix}.} 
Note that by (i), to check whether auxiliary graph $\mathcal{G}$ 
has directed triangles or not, we only need to consider the subgraph $\widehat{\mathcal{G}}$ of $\mathcal{G}$ spanned by nodes which {\bf do not split}. 
Hence each node $(p, q)$ will be such that $p, q\in \boxtimes$ where $\boxtimes$ 
 could be $\myA, \myB, \myC, \myD$ or $\myR$; in this case, we say that $(p,q)$ is {\it of type-$\boxtimes$}. 
 Two nodes are {\it of the same type} if they are both of type-$\boxtimes$ (i.e, both of type-$\myA$, $\myB$, etc). 
 
 For such a generic non-splitting node $(p,q)$, we list its distance under ultrametric $U_X, U_Y$, and $U_Z$, respectively, which is also the height of the $\LCA(p,q)$ in representing trees $T_X, T_Y$ and $T_Z$, respectively. See Figure \ref{fig:undirected_subgraph_vis}. In particular, for a non-splitting node $(\rhat, \rhat') \in \myR$, we first only consider the case that $\rhat, \rhat'$ {\bf do not} share coordinate. (The only non-splitting nodes not covered by the table in Figure \ref{fig:undirected_subgraph_vis} is of the form $(\rhat, \rhat')$ where $\rhat$ and $\rhat'$ have shared coordinate.) 

\begin{figure}[htb]
    \centering
    \begin{minipage}[c]{0.65\textwidth}
        \centering
        \begin{tabular}{|c|ccc|}
            \hline
            non-splitting graph node &$T_X$ & $T_Y$ & $T_Z$ \\
            \hline\hline
            $(a, a') \in \myA$ & 5 & 3 & 2    \\
            $(b, b') \in \myB$ & 4 & 0 & 4    \\
            $(c, c') \in \myC$ & 3 & 2 & 0    \\
            $(d, d') \in \myD$ & 1 & 5 & 1    \\
            $(\rhat, \rhat')\in \myR$ with {\bf no} shared coordinate & 2 & 4 & 5    \\
            \hline
        \end{tabular}
    \end{minipage}
    \begin{minipage}[c]{0.3\textwidth}
        \includegraphics[width=\textwidth]{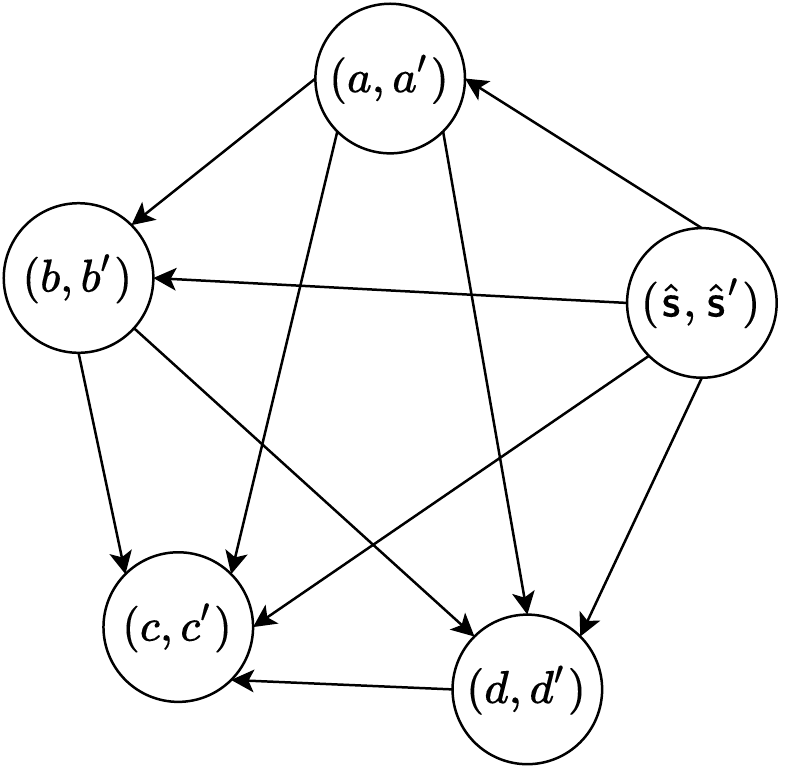}
    \end{minipage}
    \caption{Left: each entry in the table gives the distance (height of \LCA of this pair of points) of a non-splitting graph node under ultrametrics $U_X$, $U_Y$ and $U_Z$, respectively (i.e., w.r.t. trees $T_X$, $T_Y$ and $T_Z$, respectively). Right: the subgraph induced such nodes in the left cannot contain any directed triangle. %\yusu{DK: you need to change the nodes in this picture to be say $(a,a')$ etc.}
    }
    \label{fig:undirected_subgraph_vis}
\end{figure}

 Based on these distances, the subgraphs induced by such nodes are shown in the right picture of Figure \ref{fig:undirected_subgraph_vis}. 
 Note that this subgraph does not contain any directed triangle. 
 Combining with (i), this means that any directed triangles $\Delta w_1 w_2 w_3$, where $w_i$s are all \emph{of different types}, has to contain a node $(\rhat, \rhat')$ such that $\rhat, \rhat'$ have {\bf shared coordinate(s)}. 
 
 To prove (ii), what remains is to consider a triangle $\Delta w_1 w_2 w_3$ where at least two of them, say $w_1$ and $w_2$ are of the same type. Suppose all $w_i$s are of types listed in the table of Figure \ref{fig:undirected_subgraph_vis}, then it is easy to see that in this case, the resulting triangle can only be of the shapes in Figure \ref{fig:two_same_type_3_triangles}, and thus cannot be a directed triangle. Hence at least one of $w_i$ has to be non-splitting (by statement (i)), yet not included in the types covered in the table of Figure \ref{fig:undirected_subgraph_vis} -- in other words, at least one $w_i$ is of the form $(\rhat, \rhat')$ where $\rhat, \rhat'$ have {\bf shared coordinate(s)}. 
 
  \begin{figure}[thbp]
        \centering 
        \includegraphics[width=0.8\textwidth]{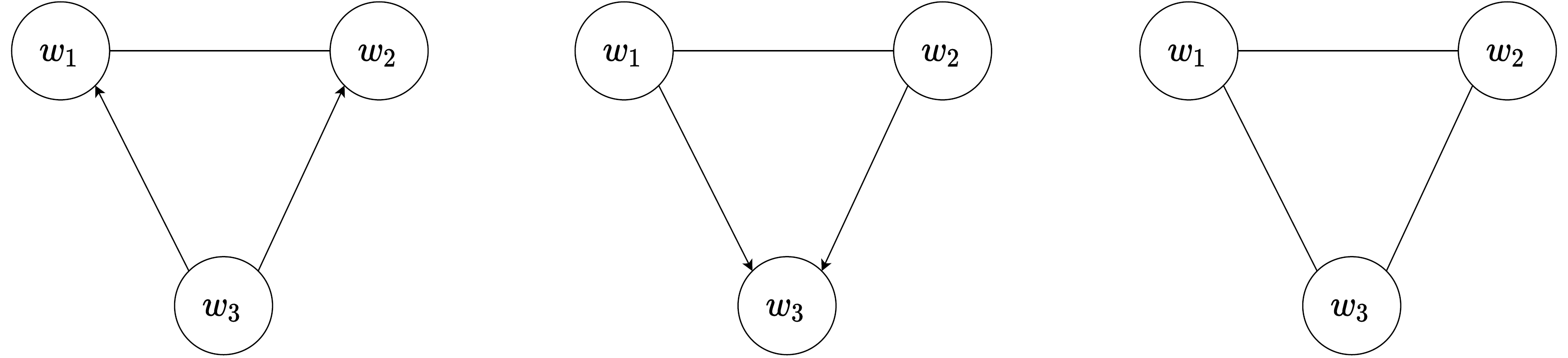}
        \caption{Three possible directed triangles.}
        \label{fig:two_same_type_3_triangles}
    \end{figure}

Putting the above two paragraphs together, statement (ii) then follows. 
 
 \paragraph*{Proving statement (iii) of Lemma \ref{lem:nomix}.} 
 
 Finally, consider a node $w = (\rhat, \rhat')$ of type-$\myR$, but such that $\rhat$ and $\rhat'$ have shared coordinates. 
In particular, $\rhat$ and $\rhat'$ could share $x$-coordinate, share $y$-coordinate, share $z$-coordinate, share both $x$- and $y$-coordinates, share both $x$- and $z$-coordinates, or share both $y$- and $z$-coordinates.  

\begin{figure}[htb]
    \centering
    \begin{minipage}[c]{0.65\textwidth}
        \centering
        \begin{tabular}{|c|ccc|}
            \hline
            non-splitting graph node &$T_X$ & $T_Y$ & $T_Z$ \\
            \hline\hline
            $(a, a') \in \myA$ & 5 & 3 & 2    \\
            $(b, b') \in \myB$ & 4 & 0 & 4    \\
            $(c, c') \in \myC$ & 3 & 2 & 0    \\
            $(d, d') \in \myD$ & 1 & 5 & 1    \\
            \makecell{$(\rhat, \rhat')\in \myR$ with shared \\ $y$ and $z$ coordinates }& 2 & 1 & 3    \\
            \hline
        \end{tabular}
    \end{minipage}
    \begin{minipage}[c]{0.3\textwidth}
        \includegraphics[width=\textwidth]{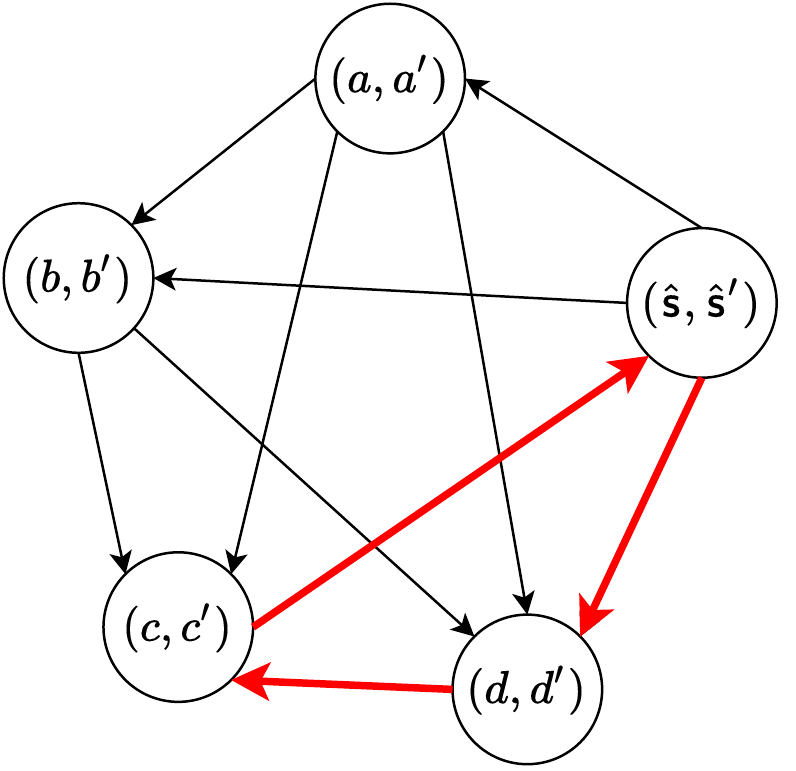}
    \end{minipage}
    \caption{Left: each entry in the table gives the distance (height of \LCA of this pair of points) of a non-splitting graph node under ultrametrics $U_X$, $U_Y$ and $U_Z$, respectively (i.e., w.r.t. trees $T_X$, $T_Y$ and $T_Z$, respectively). Right: the subgraph induced such nodes in the left contains a directed triangle (highlighted in red) formed by $w = (\rhat, \rhat'), (c, c'), (d, d')$. 
    }
    \label{fig:directed_subgraph_shared_yz}
\end{figure}

By a simple but tedious case analysis, one can verify that in each of these 6 cases, $w=(\rhat, \rhat')$ will form a directed with some pair of nodes from $w_1, w_2 \in \{ (a, a'), (b, b'), (c, c'), (d, d')\}$. 
For example, suppose $\rhat$ and $\rhat'$ share both $y$- and $z$- coordinates. 
Then the subgraph induced by $w=(\rhat, \rhat')$ and $\{ (a, a'), (b, b'), (c, c'), \\ (d, d')\}$ is shown in Figure \ref{fig:directed_subgraph_shared_yz}, and the triangle formed by $w$ and $(c, c'), (d, d')$ is a directed triangle. This thus proves statement (iii). 

As commented in the main text, the heights of these nodes are computed by a computer program to guarantee the three statements. % \yusu{DK: finish this sentence.You don' tnee to give much details.} 

\subsection{Proof of Theorem \ref{thm:MWI_arbitrary_3_mets}} \label{app:prove_thm_weakNP_3metric}
    
    \begin{proof}
        We prove the theorem via a reduction from \minVertexCover to \WMIS.
        
        \noindent \textbf{Description of the reduction.}
        Given an instance of \minVertexCover, $G = (V, E)$, $V = \{v_1, ..., v_n\}$ and $E = \{e_1, ..., e_m\}$. We construct an instance $(\mathcal{M}=\{\Metric_1, \Metric_2, \Metric_3\}; X)$ of \WMIS with 3 metrics on node set $X = \{r_{e_1}, ..., r_{e_m}, \vhat_1, ..., \vhat_n\}$. We will assign an index number for each edge, and will use that index to assign distances between nodes incident to that edge.
        
        For any edge $e_k = (v_i, v_j)$, we construct the following gadget (Figure \ref{fig:weak_gadget_3mets}), where $M = 3m$ and $\epsilon \ll 1$.
        
        \begin{figure}[ht]
            \centering 
            \includegraphics[width=0.75\textwidth]{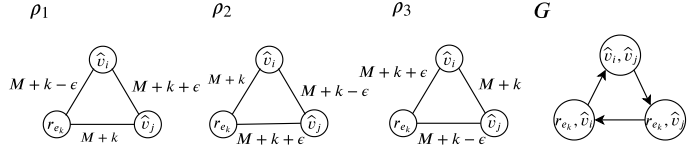}
            \caption{A directed triangle.}
            \label{fig:weak_gadget_3mets}
        \end{figure}
        
        All the other pairwise distances are $2M$. One can check that, in the auxiliary graph $\mathcal{G}$, a directed triangle (if exists) always corresponds to an edge $e = (v_i, v_j)$, and the triangle is consist of $(\vhat_i, \vhat_j), (\vhat_i, r_e)$ and $(\vhat_j, r_e)$. It is because that the only comparable distances are triples of form $\{(\vhat_i, \vhat_j), (\vhat_i, r_e), (\vhat_j, r_e) \}$. Now we prove that there is an vertex cover of size $K$ for $G$ if and only if there is a (weakly) inconsistent set of size $K$ for $(\mathcal{M}=\{\Metric_1, \Metric_2, \Metric_3\}; X)$. 
        
        \noindent \textbf{$``\Rightarrow"$ direction:}
         Assume for graph $G$, there is a vertex cover $S = \{v_1, ..., v_K\}$. Then $\widehat{S} = \{\vhat_1, ..., \vhat_K\}$ is a \WMIS, since each directed triangle corresponds to an edge in $G$, and has at least one relevant node in $S$ ($\widehat{S}$).
        
        \noindent \textbf{$``\Leftarrow"$ direction:}
        If we have a \WMIS of size $K$, denoted as $\widehat{S}^* = \{\vhat_1, ... \vhat_K\}$ (there is always an optimal solution only consisting of $\vhat_i$s). There is no directed triangle means that for each edge, at least one endpoint is in $S =\{v_1, ..., v_K\}$, which is a vertex cover of size $K$.
        
    \end{proof}

 \subsection{Summary of Hardness Results}  \label{app:hardness_results}  
        We end this section with a summary table showing hardness results in different cases.
        
        \begin{table}[ht] 
            \centering
             \resizebox{\textwidth}{!}{
            \begin{tabular}{|ccccc|}
            \hline
                Problem & Input  & Hardness & \makecell{Inapprox. \\ factor} & \makecell{Approx.\\ algorithm} \\
            \hline
                \SMIS & 2 Arbitrary metrics & \makecell{NP-complete \\ (Theorem \ref{thm:strong_NP_2metrics})} & \makecell{2 UGC \\ (Corollary \ref{coro:strong_NP_2metrics_inapprox})} & \makecell {4\\ (Theorem \ref{thm:MSI_4_approx_alg})} \\
                % \hline
                \SMIS & 2 Line metrics & \makecell{Weakly NP-complete\\ (Theorem \ref{thm:MSI_2_line_mets})} & ? & 4 \\
                % \hline
                \SMIS & 2 Ultrametrics & \makecell{NP-complete\\ (Theorem \ref{thm:MSI_2_ultra_NP})} & ? & 4 \\
                % \hline
                %\SMIS & Ultrametrics & 3 & NP-complete & ? & 4 \\
                % \WMIS(tie) & \makecell{Line metrics \\ or \\ Ultrametrics} & 2 & NP-complete & ? & 6 \\
                \WMIS & 3 Arbitrary metrics & NP-complete & \makecell{2 UGC \\ (Theorem \ref{thm:MWI_arbitrary_3_mets})} & \makecell{6 \\ (Theorem \ref{thm:MWI_6_approx})}\\
                % \hline
                \WMIS & 3 Line metrics & \makecell{Weakly NP-complete \\  (Theorem \ref{thm:MWI_3_line_metrics})} & ? & 6 \\
                % \hline
                \WMIS & 3 Ultrametrics & \makecell{NP-complete \\ (Theorem \ref{thm:MWI_3_ultra_NP})} & ? & 6 \\
                \hline
            \end{tabular}
            }
            \caption{Hardness results for different cases.} \label{tab:hardness_summary}
        \end{table}

\section{NP-complete Problem Repository} \label{app:np_complete_repo}
Here we list the NP-complete problems that are used in the hardness proofs.
    
    \begin{definition}[\minVertexCover] \cite{NP_problems_Karp1072} \\
        \textbf{Instance:} Graph $G = (V, E)$ and a positive integer $K$.\\
        \textbf{Question:} Does $G$ have a vertex cover of size at most $K$?
    \end{definition}
    A vertex cover is a set of nodes $V' \subset V$ that every edge has at least one endpoint in $V'$. This is a classical problem mentioned in Karp's 21 np-complete problems.

    \begin{definition}[\maxTwoSAT] \cite{Max_2_Sat_Garey_1974}\\
        \textbf{Instance:} Given a boolean expression $E$ of $n$ variables in conjunctive normal form (CNF) that is the
        conjunction of $m$ clauses over $n$ variables, each of which is the disjunction of at most two distinct literals.  An integer $K$.\\
        \textbf{Question:} Is there an assignment to variables such that $K$ clauses are satisfied? 
    \end{definition}

    \begin{definition}[\tournamentFeedbackVertexSet] \cite{3ordinal_directed_tri_Dom2010}\\
        \textbf{Instance:} A tournament (fully connected directed graph) $T$ and an integer $K$.\\
        \textbf{Question:} Is there a vertex set $S$ with at most $K$ nodes whose deletion will result in an acyclic directed graph.
    \end{definition}
    In \cite{3ordinal_directed_tri_Dom2010}, Dom showed that Feedback Vertex Set problem is NP-complete.

    \begin{definition}[\threeDimensionalMatching] \cite{NP_problems_Karp1072}\\
        \textbf{Instance:} Let $X, Y, Z$ be three disjoint sets with the same size. And $\matchS \subset X \times Y \times Z$ consists of triples $(x, y, z)$ where $x \in X, y \in Y, z \in Z$. Given $(X, Y, Z; \matchS)$ and an integer $K$. \\
        \textbf{Question:} Is there a 3-dimensional matching $\Pi \subset \matchS$ with size at least $K$?
    \end{definition}
    This is also a problem in the list of Karp's 21 np-complete problems.

\end{document}